\begin{document}

\title{Exact mapping between system-reservoir quantum models and semi-infinite discrete chains using orthogonal polynomials}

\author{\small Alex W. Chin$^{1}$\footnote{alex.chin@uni-ulm.de}, \'Angel Rivas$^{1}$, Susana F. Huelga$^{1}$ and Martin B. Plenio$^{1,2}$}
\date{\small {\it $^{1}$Institut f\"{u}r Theoretische Physik, Universit\"{a}t Ulm, Ulm D-89069, Germany. \\
  $^{2}$Institute for Mathematical Sciences, Imperial College London, London SW7 2PG, United Kingdom.}\\
  (Dated: June 21, 2010)}

\maketitle
\begin{abstract}
By using the properties of orthogonal polynomials, we present an exact unitary transformation that maps the Hamiltonian of a quantum system coupled linearly to a continuum of bosonic or fermionic modes to a Hamiltonian that describes a one-dimensional chain with only nearest-neighbour interactions. This analytical transformation predicts a simple set of relations between the parameters of the chain and the recurrence coefficients of the orthogonal polynomials used in the transformation, and allows the chain parameters to be computed using numerically stable algorithms that have been developed to compute recurrence coefficients. We then prove some general properties of this chain system for a wide range of spectral functions, and give examples drawn from physical systems where exact analytic expressions for the chain properties can be obtained. Crucially, the short range interactions of the effective chain system permits these open quantum systems to be efficiently simulated by the density matrix renormalization group methods.
\end{abstract}

\section{Introduction}
\newtheorem{theorem}{Theorem}[section]
\newtheorem{proposition}{Proposition}[section]
\newtheorem{collolary}{Collolary}[section]
\theoremstyle{definition}
\newtheorem{definition}{Definition}[section]
\newtheorem{remark}{Remark}[section]
\newtheorem{corollary}{Corollary}[section]

All quantum systems encountered in nature experience random perturbations due to their coupling to degrees of freedom of their local environments. Information about these degrees of freedom are not normally accessible, and to correctly predict the results of experiments, these degrees of freedom must be averaged over. This averaging introduces qualitatively new features into the otherwise unitary dynamics of the quantum system, and typically induces an effectively irreversible dynamics which drives the system towards an equilibrium with its environment \cite{weiss}. For quantum systems, this evolution towards equilibrium not only involves energy transfer to the environment, but can also cause the loss of coherence in the system state, often on a much faster timescale than the energy relaxation \cite{weiss,petru2}. This latter process of decoherence rapidly destroys quantum mechanical effects arising from the existence of phase coherence in the state of the system, and must always be considered when analyzing the results of experimental investigations of quantum phenomena. Decoherence also presents a major problem for the emerging field of quantum information and technology, a branch of physics that aims to directly harness quantum mechanical effects to create novel and highly effective devices which could greatly outperform their classical counterparts. This latter issue has attracted considerable interest in recent years, and the study of quantum systems in contact with environments, often referred to as `open-quantum systems', is a key problem in contemporary physics \cite{petru2}.

The theoretical description of a quantum system coupled to environmental degrees of freedom has been developed over many decades, and of particular important in these studies is the spin-boson model (SBM) \cite{weiss,leggett}, one of the simplest non-trivial models of open-system dynamics. This model describes a single two-level system (TLS) coupled linearly to the coordinates of an environment consisting of a continuum of harmonic oscillators. Despite its simplicity, this model cannot be solved exactly and shows a rich array of non-Markovian dynamical phenomena, including a type of quantum phase transition between dynamically localized and delocalized states of the TLS for so-called Ohmic and sub-Ohmic baths at zero temperature \cite{leggett,chin05,kehrein96,bulla,spohn,winter}. These interesting dynamics normally appear in situations where the effective interaction between the TLS and the oscillators cannot be treated perturbatively, and although many analytical and formal approaches have shed considerable light on the nature of the effects, the explicit time evolution of the TLS in these regimes must normally be simulated numerically. These simulations can be loosely divided into methods in which the environmental degrees of freedom are eliminated in order to derive an effective equation of motion for the TLS, and those in which the complete many-body dynamics of the TLS and the environment are computed. The former approach includes the hierarchy approach and various numerical path integral techniques, while the latter includes exact diagonalization and the numerical reorganization group (NRG) approaches \cite{numerics}.

Recently, the time-adaptive density matrix renormalisation group (t-DMRG) \cite{white92} has also been added to this category. This approach provides numerically-exact simulations for the dynamics of one-dimensional systems with short-range interactions, and is regarded as one of the most powerful and versatile methods in condensed matter, atomic, and optical physics for the study of strongly-correlated many-body quantum states. The recent study of Prior {\it et al.} \cite{dimer} successfully used a t-DMRG approach for a two-spin system where each spin was in contact with an oscillator environment with arbitrarily complex spectral densities. The implementation and accuracy of this approach rests on an exact mapping between the canonical SBM and a one-dimensional harmonic chain with short range interactions. The reasons why this form of Hamiltonian allows the t-DMRG to efficiently simulate extremely large chains is disscussed at length in Ref. \cite{white92}.

The idea of this mapping was first introduced in the NRG studies of Bulla {\it et al.}~\cite{bulla}, who first discretise the continuous environment and then perform a recursive numerical technique to bring the truncated finite Hamiltonian to the desired chain form. The dynamics of this chain can then be numiercally simulated \cite{anders07}. In this paper we show how the formal properties of orthogonal polynomials can be used to implement an exact unitary transformation which maps an arbitrary continuous SBM directly onto the nearest-neighbour $1-D$ chain without any need for discretisation. In several important cases we can actually perform the mapping analytically and predict the energies and couplings of the chain in closed form. Our analyisis of the mapping shows that these energies and couplings are simply related to the recurrence coefficients of orthogonal polynomials, and when analytical results are not available, we can make use of sophisticated numerical techniques that compute recurrence coefficients with high precision and which do not suffer from the numerical instabilities which often appear in the standard recursive technique. Not only does this transformation aid the simulation of non-perturbative open-system dynamics, we shall show that the use of orthogonal polynomials also reveals some universal features of TLS dynamics which are independent of the spectral density of the environmental oscillators, and which may point the way to even more efficient simulations of strongly non-markovian open-system dynamics.

The plan of this paper is as follows. In section \ref{sec:orthogonalpolynomials} we introduce the formal properties of orthogonal polynomials needed to implement the chain mapping, and in section \ref{sec;system-reservoir} we introduce the general model of an open-quantum system  which can be mapped onto a $1-D$ chain suitable for t-DMRG simulation. This section also contains details about the transformation itself and analytical expressions for the frequencies and nearest-neighbour interactions of the chain are presented and discussed. Section \ref{sec;examples1} gives explicit examples of this transformation for the SBM, and \ref{sec;discrete} deals with an extension of the orthogonal polynomial method to systems interacting with a discrete number of environmental modes. These discrete environments may be found in certain physical systems, and are also important in the NRG description of open-quantum systems where the continuous spectrum of the environment is approximated by a discrete set of oscillators \cite{bulla}. We then end with some general conclusions and a brief recapitulation of the main results.

\section{Orthogonal Polynomials}
\label{sec:orthogonalpolynomials}
In this section we present a few useful properties of orthogonal polynomials that will be employed in the following. This section also serves to fix our notation.
\subsection{Basic concepts}
Let $\mathbb{P}$ be the space of the polynomials with real coefficients (the following may also be extended to complex, but real coefficients will be enough for our proposes), and $\mathbb{P}_n\subset\mathbb{P}$ the space of polynomials of degree equal to $n$, this is $p_n(x)\in\mathbb{P}_n$:
\[
p_n(x)=\sum_{m=0}^n a_mx^m,
\]
where the $a_m\in\mathds{R}$ and $x$ is a real variable.

\begin{definition}
A polynomial $p_n(x)\in\mathbb{P}_n$ is called \emph{monic} if its leading coefficient satisfies $a_n=1$, will be denoted by $p_n(x)\equiv\pi_n(x)$. Of course every polynomial of $\mathbb{P}_n$ can become monic just dividing it by $a_n$.
\end{definition}

\begin{definition}
Let $\mu(x)$ be a nondecreasing and differentiable function on some interval $[a,b]\in\mathds{R}$, and assume that the induced positive measure $d\mu(x)$ has finite moments of all orders:
\[
\int_a^bx^r d\mu(x)<\infty, \quad r=0,1,2,\ldots
\]
Then for any $p(x),q(x)\in\mathbb{P}$ we define an \emph{inner product} as
\[
\langle p,q\rangle_\mu=\int_a^bp(x)q(x)d\mu(x).
\]
\end{definition}
Of course it defines a \emph{norm} via
\[
\|p\|_\mu=\sqrt{\langle p, p\rangle_\mu}=\left(\int_a^bp^2(x)d\mu(x)\right)^{1/2}\geq0,
\]
and satisfies the Cauchy-Schwarz's inequality,
\[
|\langle p,q\rangle_\mu |\leq \|p\|_\mu \|q\|_\mu.
\]
\begin{definition}
Two polynomials $p(x),q(x)\in\mathbb{P}$ are said to be \emph{orthogonal} with respect to the measure $d\mu(x)$ if its inner product is zero $\langle p,q\rangle_\mu=0$.
\end{definition}
\begin{definition}
A set of monic polynomials $\{\pi_n(x)\in \mathbb{P}_n,n=0,1,2,\ldots\}$ is called an \emph{orthogonal set} with respect to the measure $d\mu(x)$ if
\[
\langle \pi_n, \pi_m\rangle_\mu=\|\pi_n\|_\mu^2\delta_{n,m} \quad\text{for }n,m=0,1,2,\ldots
\]
Sometimes it is useful to normalize the monic polynomials, $\tilde{p}_n(x)=\pi_n(x)/\|\pi_n(x)\|_\mu$, such that $\langle \tilde{p}_n, \tilde{p}_m\rangle_\mu=\delta_{n,m}$, then the set is called \emph{orthonormal}, however note that in general the normalized polynomials lose the monic character.
\end{definition}

\begin{theorem} For any (non-zero) measure $d\mu(x)$ there exist a unique family of monic orthogonal polynomials $\{\pi_n(x)\in \mathbb{P}_n,n=0,1,2,\ldots\}$.
\end{theorem}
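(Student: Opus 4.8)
The plan is to obtain existence by a Gram--Schmidt orthogonalisation of the monomial basis $\{1,x,x^2,\ldots\}$ with respect to $\langle\cdot,\cdot\rangle_\mu$, and to obtain uniqueness by a short linear-algebra argument that exploits the fact that the difference of two monic polynomials of the same degree has strictly lower degree. For existence, I would set $\pi_0(x)=1$ and proceed inductively: assuming $\pi_0,\ldots,\pi_{n-1}$ have been constructed with $\pi_k$ monic of degree $k$ and $\langle\pi_k,\pi_j\rangle_\mu=0$ for $j<k$, define
\[
\pi_n(x)=x^n-\sum_{k=0}^{n-1}\frac{\langle x^n,\pi_k\rangle_\mu}{\|\pi_k\|_\mu^2}\,\pi_k(x).
\]
Then $\pi_n$ is monic of degree $n$, since every subtracted term has degree at most $n-1$; and pairing with $\pi_m$ for $m\le n-1$ and using the induction hypothesis gives $\langle\pi_n,\pi_m\rangle_\mu=\langle x^n,\pi_m\rangle_\mu-\langle x^n,\pi_m\rangle_\mu=0$. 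Because $\{\pi_0,\ldots,\pi_{n-1}\}$ spans the polynomials of degree at most $n-1$, $\pi_n$ is in fact orthogonal to every polynomial of degree $<n$, which completes the induction.

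The one place where the hypotheses on $\mu$ are genuinely used — and the step I expect to be the real obstacle — is showing the recursion is well posed, i.e.\ that $\|\pi_k\|_\mu\neq0$ for every $k$, equivalently that $\langle\cdot,\cdot\rangle_\mu$ is positive definite on $\mathbb{P}$. I would argue as follows: since $\mu$ is nondecreasing and differentiable on $[a,b]$ and $d\mu$ is a non-zero measure, $\mu$ cannot be piecewise constant, so its set of points of increase (the support of $d\mu$) is infinite. A nonzero polynomial $p$ has only finitely many zeros, hence $p^2(x)>0$ at all but finitely many support points of $d\mu$, so $\|p\|_\mu^2=\int_a^b p^2(x)\,d\mu(x)>0$. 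In particular each monic $\pi_k$, being of exact degree $k$ and therefore nonzero, satisfies $\|\pi_k\|_\mu>0$, which legitimises the divisions above; the same positivity shows conversely that any family satisfying the orthogonality relation must consist of nonzero polynomials of the stated degrees.

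For uniqueness, suppose $\{\pi_n\}$ and $\{\hat\pi_n\}$ are both monic orthogonal families for $d\mu$. Fix $n$. Since both are monic of degree $n$, the leading terms cancel and $r:=\pi_n-\hat\pi_n$ is a polynomial of degree at most $n-1$, hence $r=\sum_{k=0}^{n-1}c_k\pi_k$. Taking $\langle\,\cdot\,,\pi_j\rangle_\mu$ for $0\le j\le n-1$, the left-hand side is $\langle\pi_n,\pi_j\rangle_\mu-\langle\hat\pi_n,\pi_j\rangle_\mu$, and both terms vanish: the first because $\{\pi_n\}$ is orthogonal, the second because $\pi_j$ has degree $j<n$ and $\hat\pi_n$ is orthogonal to all polynomials of degree $<n$ (as $\{\hat\pi_0,\ldots,\hat\pi_{n-1}\}$ spans them). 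The right-hand side equals $c_j\|\pi_j\|_\mu^2$, and $\|\pi_j\|_\mu>0$ by the non-degeneracy established above, so $c_j=0$ for every $j$, hence $\pi_n=\hat\pi_n$. Since $n$ was arbitrary, the two families coincide, proving uniqueness. The only subtlety worth flagging in the write-up is the ``infinitely many points of increase'' claim in the middle paragraph; everything else is routine.
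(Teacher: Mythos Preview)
Your proof is correct and follows the same Gram--Schmidt construction as the paper. In fact you supply details the paper's proof omits: the paper merely states the Gram--Schmidt formula and stops, without separately verifying the non-degeneracy $\|\pi_k\|_\mu>0$ or giving an explicit uniqueness argument.
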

\begin{proof} The proof is by construction. Applying the well-known {\it Gram-Schmidt orthogonalization procedure} to the sequence $\{1,x,x^2,\ldots\}$, the monic orthogonal polynomials are constructed from $\pi_{0}=1$ using the formula:
\[
\pi_k(x)=m_k(x)-\sum_{n=0}^{k-1}\left(\frac{\langle m_k,\pi_n\rangle_\mu}{\langle \pi_n,\pi_n\rangle_\mu}\right)\pi_n
\]
with $m_k(x)=x^k$.
\end{proof}

\begin{theorem} \label{theorem2} The set of monic polynomials $\{\pi_n(x)\in \mathbb{P}_n,n=0,1,2,\ldots\}$ is a basis of the space $\mathbb{P}$. More specifically if $p(x)$ is a polynomial of degree less or equal to $r$, then it can be uniquely represented by
\begin{equation}\label{linearspan}
p(x)=\sum_{n=0}^r c_n\pi_n(x)
\end{equation}
for some real constants $c_k$.
\end{theorem}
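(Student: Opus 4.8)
The plan is to prove the representation \eqref{linearspan} by induction on the degree $r$, and then to establish uniqueness (equivalently, the linear independence of $\{\pi_0,\dots,\pi_r\}$) by exploiting orthogonality. The single structural fact that does all the work is that each $\pi_n$ is monic of degree exactly $n$: this makes the transition between the monomial basis $\{1,x,\dots,x^r\}$ and $\{\pi_0,\dots,\pi_r\}$ triangular with unit diagonal, hence invertible. The assertion about $\mathbb{P}$ itself is then immediate, since every polynomial has some finite degree $r$.

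For existence I would argue as follows. If $r=0$ then $p(x)=a_0=a_0\pi_0(x)$, because $\pi_0\equiv 1$. Assume the statement holds for all polynomials of degree at most $r-1$, and let $p\in\mathbb{P}$ have degree at most $r$, writing $a_r$ for the coefficient of $x^r$ in $p$ (possibly zero). Since $\pi_r$ is monic of degree $r$, the polynomial $p(x)-a_r\pi_r(x)$ has degree at most $r-1$; by the inductive hypothesis it equals $\sum_{n=0}^{r-1}c_n\pi_n(x)$ for suitable real $c_n$, and setting $c_r:=a_r$ yields \eqref{linearspan}.

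For uniqueness, suppose $\sum_{n=0}^r c_n\pi_n(x)=\sum_{n=0}^r c'_n\pi_n(x)$, i.e.\ $\sum_{n=0}^r(c_n-c'_n)\pi_n=0$. Taking the inner product with $\pi_m$ for each fixed $m\le r$ and using $\langle\pi_n,\pi_m\rangle_\mu=\|\pi_m\|_\mu^2\delta_{n,m}$ leaves $(c_m-c'_m)\|\pi_m\|_\mu^2=0$. Since $\pi_m$ is a nonzero polynomial and $d\mu$ is a nonzero measure supported on infinitely many points (which is in any case required for the infinite orthogonal family, and for the Gram--Schmidt construction, to make sense), $\|\pi_m\|_\mu>0$, so $c_m=c'_m$. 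In particular the coefficients in \eqref{linearspan} are forced to be $c_n=\langle p,\pi_n\rangle_\mu/\|\pi_n\|_\mu^2$. Combining the two parts, $\{\pi_0,\dots,\pi_r\}$ spans and is linearly independent inside the $(r+1)$-dimensional space of polynomials of degree at most $r$, hence is a basis of it; letting $r\to\infty$ shows $\{\pi_n\}_{n\ge 0}$ is a basis of $\mathbb{P}$.

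I do not expect a genuine obstacle: the statement is essentially linear algebra. The only points demanding a little care are (i) phrasing the induction for ``degree at most $r$'' rather than ``degree exactly $r$'', so that a vanishing leading coefficient $a_r$ is permitted, and (ii) justifying $\|\pi_m\|_\mu\neq 0$, which is where one tacitly uses that $d\mu$ has infinite support. An alternative route to uniqueness is to invoke directly that polynomials of pairwise distinct degrees are linearly independent — again a consequence of the monic/degree structure — and I would mention this as a remark but carry out the orthogonality version, since it also delivers the explicit formula for the coefficients $c_n$ used later in the chain mapping.
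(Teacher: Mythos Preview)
Your proof is correct and follows essentially the same route as the paper: induction on the degree using monicity to peel off the leading term for existence, and orthogonality to identify the coefficients $c_n=\langle p,\pi_n\rangle_\mu/\|\pi_n\|_\mu^2$ for uniqueness. You are in fact slightly more careful than the paper, which compresses existence and uniqueness into the single phrase ``there is a unique $c_r$ such that $p(x)-c_r\pi_r(x)$ has degree at most $r-1$'' without separately arguing that $\|\pi_m\|_\mu\neq 0$.
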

\begin{proof}
The proof is by mathematical induction. If $r=0$ the proof is trivial. Assume that the statement is true for $r-1$, then there is a unique $c_r$ such that $p(x)-c_r\pi_r(x)=p_{r-1}(x)$ is polynomial with degree less or equal to $r-1$. Moreover, if the set of monic polynomials is orthogonal, taking the inner product on (\ref{linearspan}) with respect to $\pi_m(x)$ we obtain
\[
c_n=\frac{\langle \pi_n,p\rangle_\mu}{\|\pi_n\|_\mu^2}.\quad
\]
\end{proof}

There are of course many more properties like these which are similar to those found in linear algebra spaces;
however we are not going to go further in this way here, though the reader can find more information in references \cite{gautschi,Chihara,Szego} for example. From now on we omit the subindex $\mu$ in the inner product and norms in order to make the notation less dense.

\subsection{Recurrence Relations}
One of the most useful properties for our interests is the recurrence formula of the orthogonal polynomials. Here we divide the section into two cases, monic and orthonormal polynomials.

\begin{theorem} Let $\{\pi_n(x)\in \mathbb{P}_n,n=0,1,2,\ldots\}$ be the set of monic orthogonal polynomials with respect to some measure $d\mu(x)$, then
\begin{equation}\label{monicrecurrence}
\pi_{k+1}(x)=(x-\alpha_k)\pi_k(x)-\beta_k\pi_{k-1}(x), \quad k=0,1,2...
\end{equation}
where $\pi_{-1}(x)\equiv0$, and the recurrence coefficients are:
\begin{equation}\label{akbk}
\alpha_k=\frac{\langle x \pi_k,\pi_k\rangle}{\langle \pi_k,\pi_k\rangle}, \quad \beta_k=\frac{\langle \pi_k,\pi_k\rangle}{\langle \pi_{k-1},\pi_{k-1}\rangle}.
\end{equation}
\end{theorem}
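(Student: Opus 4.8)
The plan is to exploit the basis property established in Theorem \ref{theorem2} together with two elementary facts: multiplication by $x$ raises the degree of a polynomial by exactly one, and it is symmetric with respect to the inner product, i.e. $\langle xp,q\rangle=\langle p,xq\rangle$ for all $p,q\in\mathbb{P}$ (immediate since $x$ is real-valued on $[a,b]$).

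First I would observe that $x\pi_k(x)$ is a monic polynomial of degree $k+1$, exactly as is $\pi_{k+1}(x)$; hence their difference has degree at most $k$ and, by Theorem \ref{theorem2}, admits a unique expansion $x\pi_k(x)-\pi_{k+1}(x)=\sum_{n=0}^{k}c_n\pi_n(x)$ with $c_n\in\mathds{R}$. Taking the inner product of this identity with $\pi_m$ for $0\le m\le k$ and using orthogonality (in particular $\langle\pi_{k+1},\pi_m\rangle=0$) yields $c_m=\langle x\pi_k,\pi_m\rangle/\|\pi_m\|^2=\langle\pi_k,x\pi_m\rangle/\|\pi_m\|^2$, where I have moved the factor of $x$ onto the second argument.

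The key step is then a degree count: for $m\le k-2$ the polynomial $x\pi_m(x)$ has degree $m+1\le k-1$, so by Theorem \ref{theorem2} it is a linear combination of $\pi_0,\dots,\pi_{k-1}$, each of which is orthogonal to $\pi_k$; therefore $c_m=0$ whenever $m\le k-2$. Only $c_{k-1}$ and $c_k$ survive, so $x\pi_k(x)=\pi_{k+1}(x)+c_k\pi_k(x)+c_{k-1}\pi_{k-1}(x)$, which upon rearrangement is precisely (\ref{monicrecurrence}) with $\alpha_k=c_k$ and $\beta_k=c_{k-1}$. The expression for $\alpha_k$ in (\ref{akbk}) is then read off directly from $c_k=\langle x\pi_k,\pi_k\rangle/\langle\pi_k,\pi_k\rangle$. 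For $\beta_k=c_{k-1}=\langle\pi_k,x\pi_{k-1}\rangle/\|\pi_{k-1}\|^2$ I would write $x\pi_{k-1}(x)=\pi_k(x)+(\text{terms of degree}\le k-1)$, the parenthesized part being orthogonal to $\pi_k$, so that $\langle\pi_k,x\pi_{k-1}\rangle=\langle\pi_k,\pi_k\rangle$ and hence $\beta_k=\langle\pi_k,\pi_k\rangle/\langle\pi_{k-1},\pi_{k-1}\rangle$ as claimed.

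Finally I would dispatch the boundary case $k=0$: since $\pi_{-1}\equiv0$ by convention, the $\beta_0$ term drops out of (\ref{monicrecurrence}) and only $\alpha_0=\langle x\pi_0,\pi_0\rangle/\langle\pi_0,\pi_0\rangle$ must be verified, which is just the $m=0$ instance of the computation above. There is no genuine obstacle here; the argument is elementary and the only points demanding care are the degree bookkeeping (using monicity to guarantee that $x\pi_m$ has degree exactly $m+1$) and the symmetry $\langle xp,q\rangle=\langle p,xq\rangle$. The one mild ``trick'' is transferring the factor of $x$ onto the lower-degree argument, which is what makes orthogonality to $\pi_k$ applicable and collapses the sum to just two terms.
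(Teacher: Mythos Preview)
Your proof is correct and follows essentially the same approach as the paper's: expand the degree-$\le k$ polynomial $x\pi_k-\pi_{k+1}$ (the paper uses $\pi_{k+1}-x\pi_k$, a harmless sign difference) in the basis $\{\pi_n\}$, then use orthogonality and the shift $\langle x\pi_k,\pi_m\rangle=\langle\pi_k,x\pi_m\rangle$ to kill all but the $n=k$ and $n=k-1$ coefficients. Your explicit remarks on the symmetry of multiplication by $x$ and the $k=0$ boundary case are slightly more detailed than the paper's write-up, but the argument is the same.
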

\begin{proof} By theorem \ref{theorem2}, $\pi_{k+1}(x)-x\pi_k(x)$ is a polynomial of degree less or equal to $k$, we can write it in terms of monic polynomials of degree less or equal $k$:
\[
\pi_{k+1}(x)-x\pi_k(x)=\sum_{n=0}^k c_n \pi_n(x),
\]
and since the set is orthogonal
\[
c_k=\frac{\langle\pi_{k+1}-x\pi_k,\pi_k\rangle}{\langle \pi_k,\pi_k\rangle}=-\frac{\langle x\pi_k,\pi_k\rangle}{\langle \pi_k,\pi_k\rangle},
\]
similarly
\[
c_{k-1}=\frac{\langle\pi_{k+1}-x\pi_k,\pi_{k-1}\rangle}{\langle \pi_{k-1},\pi_{k-1}\rangle}=-\frac{\langle x\pi_k,\pi_{k-1}\rangle}{\langle \pi_{k-1},\pi_{k-1}\rangle}=-\frac{\langle \pi_k,x\pi_{k-1}\rangle}{\langle \pi_{k-1},\pi_{k-1}\rangle}=-\frac{\langle \pi_k,\pi_{k}\rangle}{\langle \pi_{k-1},\pi_{k-1}\rangle},
\]
where in the last step we have used the fact that terms of degree less than $k$ are orthogonal to $\pi_k(x)$, because they can be expressed as linear combination of $\{\pi_n(x)\in \mathbb{P}_n,n=0,1,2,\ldots,k-1\}$ which are all orthogonal to $\pi_k(x)$. By the same argument, it is easy to see that the remaining coefficients are zero $c_{n}=0$ for $n=0,\ldots,k-2$. So by comparing with equation (\ref{monicrecurrence}) we obtain $\alpha_k=-c_k$ and $\beta_k=-c_{k-1}$.
\end{proof}

\begin{theorem}
Assume that we have a set of orthonormal polynomials $\{\tilde{p}_n(x)\in \mathbb{P}_n,n=0,1,2,\ldots\}$ then, the recurrence relation is:
\begin{equation}\label{normalrecurrence}
\tilde{p}_{k+1}(x)=(C_kx-A_k)\tilde{p}_k(x)-B_k\tilde{p}_{k-1}(x), \quad k=0,1,2...
\end{equation}
where $\tilde{p}_{-1}(x)\equiv0$. If we denote by $\kappa_n$ the leading coefficient of $\tilde{p}_n(x)$ and by $\kappa'_n$ the second leading coefficient, i.e. $\tilde{p}_n(x)=\kappa_nx^n+\kappa'_nx^{n-1}+\ldots$, we get
\begin{equation}
A_k=\frac{\kappa_{k+1}}{\kappa_{k}}\left(\frac{\kappa'_{k}}{\kappa_{k}}-\frac{\kappa'_{k+1}}{\kappa_{k+1}}\right), \quad B_k=\frac{\kappa_{k+1}\kappa_{k-1}}{\kappa^2_{k}}, \quad C_k=\frac{\kappa_{k+1}}{\kappa_{k}} \label{AkBkCk}
\end{equation}
\end{theorem}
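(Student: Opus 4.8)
The plan is to obtain the orthonormal recurrence (\ref{normalrecurrence}) by rescaling the monic recurrence (\ref{monicrecurrence}) that has just been proved, and then to read off $A_k,B_k,C_k$ from the leading coefficients. Since $\tilde{p}_n=\pi_n/\|\pi_n\|$ and $\pi_n$ is monic, the leading coefficient of $\tilde{p}_n$ satisfies $\kappa_n=1/\|\pi_n\|$, i.e. $\pi_n(x)=\tilde{p}_n(x)/\kappa_n$. Substituting this into (\ref{monicrecurrence}) and multiplying through by $\kappa_{k+1}$ gives
\[
\tilde{p}_{k+1}(x)=\frac{\kappa_{k+1}}{\kappa_k}(x-\alpha_k)\,\tilde{p}_k(x)-\beta_k\frac{\kappa_{k+1}}{\kappa_{k-1}}\,\tilde{p}_{k-1}(x),
\]
which is already of the form (\ref{normalrecurrence}) with $C_k=\kappa_{k+1}/\kappa_k$, $A_k=\alpha_k\kappa_{k+1}/\kappa_k$ and $B_k=\beta_k\kappa_{k+1}/\kappa_{k-1}$.

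It then remains to express $\alpha_k$ and $\beta_k$ through $\kappa_n$ and $\kappa'_n$. For $\beta_k$ this is immediate from (\ref{akbk}): $\beta_k=\|\pi_k\|^2/\|\pi_{k-1}\|^2=\kappa_{k-1}^2/\kappa_k^2$, hence $B_k=\kappa_{k+1}\kappa_{k-1}/\kappa_k^2$. For $\alpha_k$ I would compare the coefficients of $x^k$ on both sides of (\ref{monicrecurrence}), using that the monic polynomial has the expansion $\pi_n(x)=x^n+(\kappa'_n/\kappa_n)x^{n-1}+\cdots$: the left-hand side contributes $\kappa'_{k+1}/\kappa_{k+1}$, while $(x-\alpha_k)\pi_k$ contributes $\kappa'_k/\kappa_k-\alpha_k$ and $\beta_k\pi_{k-1}$ contributes nothing at this order, so $\alpha_k=\kappa'_k/\kappa_k-\kappa'_{k+1}/\kappa_{k+1}$ and therefore $A_k=\frac{\kappa_{k+1}}{\kappa_k}\bigl(\frac{\kappa'_k}{\kappa_k}-\frac{\kappa'_{k+1}}{\kappa_{k+1}}\bigr)$, as claimed.

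Alternatively — and this is the self-contained version I would present if one prefers not to invoke (\ref{monicrecurrence}) — one repeats the argument of that theorem directly in the orthonormal setting: expand $x\tilde{p}_k(x)=\sum_{n=0}^{k+1}c_n\tilde{p}_n(x)$ with $c_n=\langle x\tilde{p}_k,\tilde{p}_n\rangle=\langle\tilde{p}_k,x\tilde{p}_n\rangle$, observe that $c_n=0$ for $n\le k-2$ since then $x\tilde{p}_n$ has degree $<k$, and evaluate the three surviving coefficients by matching powers: $c_{k+1}=\kappa_k/\kappa_{k+1}$ and $c_{k-1}=\kappa_{k-1}/\kappa_k$ from the leading terms, and $c_k=\kappa'_k/\kappa_k-\kappa'_{k+1}/\kappa_{k+1}$ from the coefficient of $x^k$ in $x\tilde{p}_k-c_{k+1}\tilde{p}_{k+1}$. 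Solving $x\tilde{p}_k=c_{k+1}\tilde{p}_{k+1}+c_k\tilde{p}_k+c_{k-1}\tilde{p}_{k-1}$ for $\tilde{p}_{k+1}$ then yields $C_k=1/c_{k+1}$, $A_k=c_k/c_{k+1}$, $B_k=c_{k-1}/c_{k+1}$, which reproduces (\ref{AkBkCk}).

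The only step needing real care is the extraction of $A_k$ from the second-leading coefficients; everything else is bookkeeping with leading powers and norms. One should also note the boundary case $k=0$, where the $\tilde{p}_{-1}$ term is absent by the convention $\tilde{p}_{-1}\equiv0$ so that $B_0$ plays no role, and one must resist simplifying $\alpha_k$ further: it genuinely is the difference of the two ratios $\kappa'_k/\kappa_k$ and $\kappa'_{k+1}/\kappa_{k+1}$, so that $A_k$ is $C_k$ times that difference and does not telescope away.
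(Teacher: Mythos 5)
Your proposal is correct, and your primary route is genuinely different from the paper's proof of this theorem. The paper argues directly in the orthonormal setting, mirroring the monic proof: it fixes $C_k=\kappa_{k+1}/\kappa_k$ and then $A_k$ so that $\tilde{p}_{k+1}-(C_kx-A_k)\tilde{p}_k$ drops to degree $k-1$, kills its components along $\tilde{p}_n$ for $n\le k-2$ by orthogonality, and obtains $B_k$ from the single inner product $\langle\tilde{p}_{k-1},x\tilde{p}_k\rangle=\kappa_{k-1}/\kappa_k$; your self-contained alternative (expanding $x\tilde{p}_k=\sum_{n\le k+1}c_n\tilde{p}_n$ and matching leading terms) is essentially this same argument in a mildly different packaging. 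Your main route, by contrast, recycles the already-proved monic recurrence (\ref{monicrecurrence}): substituting $\pi_n=\tilde{p}_n/\kappa_n$, reading off $C_k=\kappa_{k+1}/\kappa_k$, $A_k=\alpha_kC_k$, $B_k=\beta_k\kappa_{k+1}/\kappa_{k-1}$, and then converting $\alpha_k,\beta_k$ into $\kappa,\kappa'$ data via the $x^k$-coefficient comparison and $\beta_k=\|\pi_k\|^2/\|\pi_{k-1}\|^2=\kappa_{k-1}^2/\kappa_k^2$ from (\ref{akbk}). This is in effect Proposition \ref{proposition1} of the paper run in the opposite direction, and it buys economy: no new orthogonality argument is needed, only bookkeeping with leading coefficients and norms. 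The only caveat is that it leans on the convention $\tilde{p}_n=\pi_n/\|\pi_n\|$, i.e. $\kappa_n=1/\|\pi_n\|>0$ (the paper's convention), whereas the paper's direct argument is indifferent to sign choices; since $\pi_n=\tilde{p}_n/\kappa_n$ and $\kappa_n^2=1/\|\pi_n\|^2$ hold for either sign, this restriction is cosmetic, and your identity $\alpha_k=\kappa'_k/\kappa_k-\kappa'_{k+1}/\kappa_{k+1}$ is precisely the content of the paper's choice of $A_k$.
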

\begin{proof} The proof is not more complicated than in the monic case, if we chose $C_k=\frac{\kappa_{k+1}}{\kappa_{k}}$ then, $\tilde{p}_{k+1}(x)-C_kx\tilde{p}_k(x)$ is a polynomial of degree lower of equal to $k$, in the same fashion the choice
\[
A_k=\frac{C_k\kappa'_{k}-\kappa'_{k+1}}{\kappa_k}=\frac{\kappa_{k+1}}{\kappa_{k}}\left(\frac{\kappa'_{k}}{\kappa_{k}}-\frac{\kappa'_{k+1}}{\kappa_{k+1}}\right)
\]
makes $\tilde{p}_{k+1}(x)-(C_kx-A_k)\tilde{p}_k(x)$ a polynomial of degree lower of equal to $k-1$. By orthogonality the inner product of it with $\tilde{p}_n$ for $n\leq k-2$ are zero, and the inner product with $\tilde{p}_{k-1}$ yields
\begin{eqnarray*}
\langle\tilde{p}_{k-1},\tilde{p}_{k+1}-(C_kx\tilde{p}_k+A_k)\tilde{p}_k\rangle=-C_k\langle\tilde{p}_{k-1},x\tilde{p}_k\rangle=-C_k\langle x\tilde{p}_{k-1},\tilde{p}_k\rangle\\
=-C_k \frac{\kappa_{k-1}}{\kappa_{k}}\langle \tilde{p}_{k},\tilde{p}_k\rangle=-C_k \frac{\kappa_{k-1}}{\kappa_{k}},
\end{eqnarray*}
so by taking into account that this is the coefficient which goes with $\tilde{p}_{k-1}$, replacing the value of $C_k$ gives the value of $B_k$ in (\ref{AkBkCk}).
\end{proof}

The relation between both cases is given by the next proposition.

\begin{proposition} \label{proposition1} Let $\{\pi_n(x)\in \mathbb{P}_n,n=0,1,2,\ldots\}$ a set of monic orthogonal polynomials and $\tilde{p}_n(x)=\pi_n(x)/\|\pi_n(x)\|$ their orthonomalized version, then the coefficients for the respective recurrence relations are related by:
\begin{equation}\label{monic-normal}
A_k=\frac{\alpha_k}{\sqrt{\beta_{k+1}}}, \quad B_k=\sqrt{\frac{\beta_k}{\beta_{k+1}}} \quad\text{and } C_k=\frac{1}{\sqrt{\beta_{k+1}}}.
\end{equation}
\end{proposition}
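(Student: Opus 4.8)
The plan is to substitute the defining relation $\pi_n(x)=\|\pi_n\|\,\tilde p_n(x)$ directly into the monic three-term recurrence (\ref{monicrecurrence}) and then identify the orthonormal coefficients by comparison with (\ref{normalrecurrence}). The only input beyond this substitution is the observation already contained in (\ref{akbk}), namely that $\beta_k=\langle\pi_k,\pi_k\rangle/\langle\pi_{k-1},\pi_{k-1}\rangle=\|\pi_k\|^2/\|\pi_{k-1}\|^2$ is precisely the ratio of consecutive squared norms.

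Concretely, inserting $\pi_n=\|\pi_n\|\,\tilde p_n$ into (\ref{monicrecurrence}) gives
\[
\|\pi_{k+1}\|\,\tilde p_{k+1}(x)=(x-\alpha_k)\|\pi_k\|\,\tilde p_k(x)-\beta_k\|\pi_{k-1}\|\,\tilde p_{k-1}(x).
\]
Dividing through by $\|\pi_{k+1}\|$ and collecting the $x\tilde p_k$, $\tilde p_k$ and $\tilde p_{k-1}$ terms, one substitutes $\|\pi_k\|/\|\pi_{k+1}\|=1/\sqrt{\beta_{k+1}}$ and $\beta_k\|\pi_{k-1}\|/\|\pi_{k+1}\|=\|\pi_k\|^2/(\|\pi_{k-1}\|\|\pi_{k+1}\|)=\sqrt{\beta_k/\beta_{k+1}}$, both of which follow immediately from $\beta_{k+1}=\|\pi_{k+1}\|^2/\|\pi_k\|^2$. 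This yields
\[
\tilde p_{k+1}(x)=\left(\frac{1}{\sqrt{\beta_{k+1}}}\,x-\frac{\alpha_k}{\sqrt{\beta_{k+1}}}\right)\tilde p_k(x)-\sqrt{\frac{\beta_k}{\beta_{k+1}}}\,\tilde p_{k-1}(x),
\]
and matching with (\ref{normalrecurrence}) reads off $C_k=1/\sqrt{\beta_{k+1}}$, $A_k=\alpha_k/\sqrt{\beta_{k+1}}$ and $B_k=\sqrt{\beta_k/\beta_{k+1}}$, which is exactly (\ref{monic-normal}).

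The only step that is not a one-line manipulation is justifying the ``matching'': I must argue that the coefficients appearing in (\ref{normalrecurrence}) are uniquely determined, so that an identity of that shape forces the stated values. This is where Theorem \ref{theorem2} enters, since $x\tilde p_k(x)$ has degree $k+1$ and hence a unique expansion in the basis $\{\tilde p_0,\dots,\tilde p_{k+1}\}$, of which the three-term recurrence retains only the $\tilde p_{k+1},\tilde p_k,\tilde p_{k-1}$ components; comparing the two expressions for $\tilde p_{k+1}$ is therefore legitimate coefficient by coefficient. As an independent check one could instead verify (\ref{monic-normal}) against the explicit formulas (\ref{AkBkCk}) using $\kappa_n=1/\|\pi_n\|$ together with $\kappa'_n/\kappa_n=-\sum_{j=0}^{n-1}\alpha_j$ (obtained by comparing the coefficients of $x^k$ on the two sides of (\ref{monicrecurrence})); I would keep that more computational route only as a sanity check. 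Finally, the $k=0$ case of the $B_k$ relation is vacuous, since $B_0\tilde p_{-1}$ and $\beta_0\pi_{-1}$ both vanish, and is to be read with the usual convention $\beta_0=\|\pi_0\|^2$.
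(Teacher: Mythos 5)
Your proof is correct and follows essentially the same route as the paper: substitute $\pi_n=\|\pi_n\|\tilde p_n$ into the monic recurrence, divide by $\|\pi_{k+1}\|$, rewrite the norm ratios via $\beta_k=\|\pi_k\|^2/\|\pi_{k-1}\|^2$, and compare with the orthonormal recurrence. Your added remarks on the uniqueness of the coefficients and the $k=0$ convention are fine but not substantively different from the paper's argument.
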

\begin{proof} Inserting $\pi_n(x)=\|\pi_n(x)\|\tilde{p}_n(x)$ in (\ref{monicrecurrence}) and dividing by $\|\pi_{k+1}\|$ one obtains
\[
\tilde{p}_{k+1}(x)=\frac{\|\pi_k(x)\|}{\|\pi_{k+1}(x)\|}(x-\alpha_k)\tilde{p}_k(x)-\frac{\|\pi_{k-1}(x)\|}{\|\pi_{k+1}(x)\|}\tilde{p}_{k-1}(x).
\]
Employing $\beta_k=\|\pi_k\|^2/\|\pi_{k-1}\|^2$ and comparing with the above (\ref{normalrecurrence}) we obtain (\ref{monic-normal}).
\end{proof}

\subsection{Boundness properties of the recurrence coefficients} A particular important property for our work will be the behaviour of the monic recurrence coefficients $\alpha_n$ and $\beta_n$ when $n$ is increasing. To start, we have the following theorem.
\begin{theorem} Let the support interval $[a,b]$ of $d\mu(x)$ be finite, then for each $k=0,1,2,\ldots$
\begin{eqnarray*}
&a\leq \alpha_k \leq b,\\
&\beta_k\leq\max\{a^2,b^2\}.
\end{eqnarray*}
\end{theorem}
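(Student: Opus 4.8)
The plan is to establish the two bounds separately; both follow quickly from the explicit formulas (\ref{akbk}) for the recurrence coefficients together with the basis property of Theorem \ref{theorem2}, so no deep machinery is needed.

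First I would handle $\alpha_k$. Writing out (\ref{akbk}),
\[
\alpha_k=\frac{\langle x\pi_k,\pi_k\rangle}{\langle\pi_k,\pi_k\rangle}=\frac{\int_a^b x\,\pi_k(x)^2\,d\mu(x)}{\int_a^b \pi_k(x)^2\,d\mu(x)},
\]
we see that $\alpha_k$ is a weighted average of $x$ against the positive measure $\pi_k(x)^2\,d\mu(x)$; the denominator $\|\pi_k\|^2$ is strictly positive because $\pi_k$ is a non-zero (monic) polynomial and $d\mu$ is non-zero. Since $a\le x\le b$ throughout the support of $d\mu$, these inequalities are inherited by the average, giving $a\le\alpha_k\le b$.

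For $\beta_k$ (with $k\ge 1$; the symbol $\beta_0$ does not appear in (\ref{monicrecurrence}) and is only fixed by convention) I would use the extremal characterisation of the monic orthogonal polynomials: among all monic polynomials of degree $k$, $\pi_k$ has the smallest $\mu$-norm. This is a one-line consequence of Theorem \ref{theorem2}: any monic $q\in\mathbb{P}_k$ expands as $q=\pi_k+\sum_{n=0}^{k-1}c_n\pi_n$, whence $\|q\|^2=\|\pi_k\|^2+\sum_{n=0}^{k-1}c_n^2\|\pi_n\|^2\ge\|\pi_k\|^2$. Applying this to the trial polynomial $q(x)=x\,\pi_{k-1}(x)$, which is monic of degree $k$ since $\pi_{k-1}$ is monic of degree $k-1$, gives
\[
\|\pi_k\|^2\le\|x\pi_{k-1}\|^2=\int_a^b x^2\,\pi_{k-1}(x)^2\,d\mu(x)\le\max\{a^2,b^2\}\,\|\pi_{k-1}\|^2,
\]
using $x^2\le\max\{a^2,b^2\}$ on $[a,b]$. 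Dividing by $\|\pi_{k-1}\|^2>0$ and recalling $\beta_k=\|\pi_k\|^2/\|\pi_{k-1}\|^2$ from (\ref{akbk}) yields $\beta_k\le\max\{a^2,b^2\}$.

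There is no real obstacle here: the only points that must not be skipped are the verification of the extremal property (immediate from Theorem \ref{theorem2}) and the choice of the competitor $x\pi_{k-1}$, after which everything reduces to bounding an average of $x$ or of $x^2$ over a bounded interval. The only mild subtlety is the meaning of the $k=0$ instance of the $\beta$-bound, which I would simply flag as a convention, since $\beta_0$ is absent from the monic recurrence (\ref{monicrecurrence}).
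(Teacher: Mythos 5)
Your proof is correct, and the two halves line up with the paper to different degrees. The $\alpha_k$ bound is argued exactly as in the paper: $\alpha_k$ is a weighted average of $x$ against the positive measure $\pi_k^2(x)\,d\mu(x)$, so it lies in $[a,b]$. For the $\beta_k$ bound the paper takes a closely related but not identical route: it uses orthogonality to write $\|\pi_k\|^2=\langle x\pi_{k-1},\pi_k\rangle$ (valid because $x\pi_{k-1}-\pi_k$ has degree at most $k-1$, hence is orthogonal to $\pi_k$) and then applies the Cauchy--Schwarz inequality together with $|x|\le\max\{|a|,|b|\}$ on the support, giving $\|\pi_k\|\le\max\{|a|,|b|\}\,\|\pi_{k-1}\|$ and, after squaring, the stated bound. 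You instead first prove the extremal property that $\pi_k$ has minimal norm among monic polynomials of degree $k$ (a Pythagoras argument via Theorem \ref{theorem2}) and then test it against the competitor $x\pi_{k-1}$. Both arguments ultimately rest on the same inequality $\|\pi_k\|\le\|x\pi_{k-1}\|$ and the same pointwise bound $x^2\le\max\{a^2,b^2\}$; yours makes the minimal-norm characterization explicit, which is a reusable fact in its own right, while the paper's version is a shorter Cauchy--Schwarz estimate that never needs to state that characterization. Your remark that the $k=0$ instance of the $\beta$ bound is purely a matter of convention (since $\beta_0$ does not enter the recurrence (\ref{monicrecurrence})) is a reasonable clarification that the paper leaves implicit.
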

\begin{proof} The first relation follows immediately from $\alpha_k=\frac{\langle x \pi_k,\pi_k\rangle}{\langle \pi_k,\pi_k\rangle}$ because $\pi^2_k(x)$ is positive for every $x$ and $a\leq x\leq b$ inside of the integration domain. For the second relation we have by orthogonality
\[
\langle \pi_k,\pi_k\rangle=\langle x\pi_{k-1},\pi_k\rangle,
\]
and the Cauchy-Schwarz's inequality gives
\[
\langle \pi_k,\pi_k\rangle\leq |x|\|\pi_{k-1}\|\|\pi_k\|\leq \max\{|a|,|b|\}\|\pi_{k-1}\|\|\pi_k\|,
\]
finally the division of both sides by $\|\pi_{k}\|$ and $\|\pi_{k-1}\|$ and squaring leads straightforwardly to the second bound.
\end{proof}

Note that for measures with unbounded support, the monic recurrence coefficients are not bounded as $n$ increases; see for example the case of Laguerre and Hermite polynomials \cite{Handbook}. However, for measures with bounded support further useful properties about the recurrence coefficients can be obtained, to do this we need the following results.

\begin{proposition} Let $\{\pi_n(x)\in \mathbb{P}_n,n=0,1,2,\ldots\}$ be a set of monic orthonormal polynomials with respect to some measure $d\mu(x)$ supported in $[-1,1]$, with recursion coefficients $\alpha_k$ and $\beta_k$ defined according to (\ref{akbk}). If we change the support of the measure to $[a,b]$ by means of a linear transformation $y=m x+c$, where
\begin{equation}
m=\frac{b-a}{2},\quad c=\frac{a+b}{2},
\end{equation}
the recursion coefficients $\alpha'_k$ and $\beta'_k$ of the transformed monic polynomials $\{\pi'_n(y)$ $\in \mathbb{P}_n,n=0,1,2,\ldots\}$ are given by
\begin{equation} \label{a'kb'k}
\alpha'_k=m \alpha_k+c, \quad \beta'_k=m^2\beta_k.
\end{equation}
\end{proposition}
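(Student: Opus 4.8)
The plan is to exhibit the transformed monic family explicitly and then read off its recurrence coefficients. First I would set $\pi'_n(y) = m^n\,\pi_n\!\big((y-c)/m\big)$. Since $\pi_n$ is monic of degree $n$ and the affine substitution $x=(y-c)/m$ sends $x^n$ to $m^{-n}y^n+\cdots$, multiplying by $m^n$ makes $\pi'_n$ monic of degree $n$ in $y$. Next I would identify the measure under which this family is orthogonal: the map $x\mapsto y=mx+c$ pushes $d\mu$ (supported on $[-1,1]$) forward to a measure $d\mu'$ supported on $[a,b]$, and the change-of-variables formula gives $\int_a^b f(y)\,d\mu'(y)=\int_{-1}^1 f(mx+c)\,d\mu(x)$ for every integrable $f$. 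Applying this to $f=\pi'_n\pi'_\ell$ yields $\langle\pi'_n,\pi'_\ell\rangle_{\mu'}=m^{n+\ell}\langle\pi_n,\pi_\ell\rangle_\mu$, which vanishes for $n\neq\ell$. Hence $\{\pi'_n\}$ is a monic orthogonal family for $d\mu'$, and by the uniqueness theorem it is \emph{the} one whose coefficients $\alpha'_k,\beta'_k$ are defined by (\ref{monicrecurrence}).

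The key step is to substitute $x=(y-c)/m$ into the three-term recurrence (\ref{monicrecurrence}) for $\pi_n$ and multiply through by $m^{k+1}$, regrouping the powers of $m$ so that each factor $\pi_j\!\big((y-c)/m\big)$ is paired with exactly $m^j$. Then the left-hand side is precisely $\pi'_{k+1}(y)$; the term $(x-\alpha_k)\pi_k$ becomes $\big(y-(m\alpha_k+c)\big)\pi'_k(y)$; and the term $\beta_k\pi_{k-1}$ becomes $m^2\beta_k\,\pi'_{k-1}(y)$. Comparing with (\ref{monicrecurrence}) written for the primed family and using the uniqueness of the recurrence coefficients gives $\alpha'_k=m\alpha_k+c$ and $\beta'_k=m^2\beta_k$, which is (\ref{a'kb'k}). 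Alternatively one can bypass the recurrence and insert $\pi'_k(y)=m^k\pi_k(x)$, $y=mx+c$, directly into the closed forms (\ref{akbk}): the factors $m^{2k}$ cancel between numerator and denominator, leaving $\alpha'_k=m\langle x\pi_k,\pi_k\rangle/\|\pi_k\|^2+c$ and $\beta'_k=m^2\|\pi_k\|^2/\|\pi_{k-1}\|^2$ for $k\geq1$.

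There is no serious obstacle here; the only care needed is the bookkeeping of the powers of $m$ in the regrouping step (so that the primed polynomials reappear cleanly and no stray factor of $m$ is left attached to $y$), and keeping the orientation and scaling of the affine map straight. I would also note in passing that the hypothesis ``monic orthonormal polynomials'' should read ``monic orthogonal polynomials'', since a non-constant polynomial cannot be simultaneously monic and normalized; only the monic orthogonal family is actually used in the statement and proof.
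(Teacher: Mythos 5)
Your proposal is correct and in substance matches the paper's proof: the paper likewise defines the transformed family $\pi'_k(y)=m^k\pi_k\!\left(\frac{y-c}{m}\right)$ and reads off $\alpha'_k$ and $\beta'_k$ directly from the inner-product formulas (\ref{akbk}), i.e.\ exactly your ``alternative'' route, with the constant Jacobian factor cancelling in the ratios just as your $m^{2k}$ factors do. Your primary route --- substituting $x=(y-c)/m$ into the three-term recurrence, multiplying by $m^{k+1}$ and invoking uniqueness of the recurrence coefficients --- is an equivalent variant of the same idea, and your remark that ``monic orthonormal'' in the hypothesis should read ``monic orthogonal'' is also well taken.
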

\begin{proof} The map $y=m x+c$ transform the measure $d\mu(y)=m d\mu(x)$ and every monic polynomial $\pi_k(x)=\pi_k(\frac{y-c}{m})$ which is not a monic polynomial in $y$. To make these polynomials monic in $y$,
we need to multiply by $m^k$, that is, the set of transformed monic polynomials is $\{\pi'_n(y)=m^k\pi_k(x)\in \mathbb{P}_n,n=0,1,2,\ldots\}$. Then we apply the definitions (\ref{akbk})
\begin{eqnarray*}
\alpha'_k=\frac{\langle y \pi'_k,\pi'_k\rangle}{\langle \pi'_k,\pi'_k\rangle}&=&\frac{\left(m \langle  x \pi'_k,\pi'_k\rangle+c\langle\pi'_k,\pi'_k\rangle\right)}{\langle \pi'_k,\pi'_k\rangle}\\
&=&\frac{m^{2k+1}\left(m\langle  x \pi_k,\pi_k\rangle+c\langle\pi_k,\pi_k\rangle\right)}{m^{2k+1} \langle \pi_k,\pi_k\rangle}=m \alpha_k+c,
\end{eqnarray*}
and similarly
\[
\beta'_k=\frac{\langle \pi'_k,\pi'_k\rangle}{\langle \pi'_{k-1},\pi'_{k-1}\rangle}=\frac{m^{2k+1}\langle \pi_k,\pi_k\rangle}{m^{2(k-1)+1}\langle \pi_{k-1},\pi_{k-1}\rangle}=m^2\beta_k.\quad
\]
\end{proof}

\begin{definition} A measure $d\mu(x)=w(x)dx$ supported in $[-1,1]$ belongs to the Szeg\"o class if
\begin{equation}\label{SzegoCondition}
\int_{-1}^1\frac{\ln w(x)}{\sqrt{1-x^2}}dx>-\infty.
\end{equation}
\end{definition}

\begin{theorem}[Szeg\"o] Let $\{\tilde{p}_n(x)\in \mathbb{P}_n,n=0,1,2,\ldots\}$ orthonormal polynomials with respect to some Szeg\"o class measure $d\mu(x)=w(x)dx$, then in the asymptotic limit $n\rightarrow\infty$ we have the approximate expressions for the first leading coefficient $\kappa_n$ and for the second one $\kappa'_n$,
\begin{eqnarray}
\kappa_n&\simeq& \frac{2^n}{\sqrt{\pi}}\exp\left[-\frac{1}{2\pi}\int_{-1}^1\frac{\ln w(x)}{\sqrt{1-x^2}}dx\right],\label{Knasin}\\
\kappa'_n&\simeq& -\frac{2^{n-1}}{\pi^{3/2}}\left[\int_{-1}^1\frac{x\ln w(x)}{\sqrt{1-x^2}}dx\right]\exp\left[-\frac{1}{2\pi}\int_{-1}^1\frac{\ln w(x)}{\sqrt{1-x^2}}dx\right]\label{Kn'asin}.
\end{eqnarray}
\end{theorem}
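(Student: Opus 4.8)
\emph{Proof sketch.} The plan is to transfer the whole problem to the unit circle, where Szeg\"o's asymptotic machinery is sharpest, and then translate the result back to $[-1,1]$. First I would set $x=\cos\theta$, equivalently $x=\tfrac12(z+z^{-1})$ with $z=e^{i\theta}$, so that a degree-$n$ polynomial in $x$ becomes a Laurent polynomial in $z$ supported on $z^{-n},\dots,z^{n}$ and $d\mu(x)=w(x)dx$ on $[-1,1]$ pulls back to the \emph{even} weight $W(\theta)=\pi\,w(\cos\theta)|\sin\theta|$ with respect to $d\theta/2\pi$ on $[-\pi,\pi]$. Using $\int_0^{\pi}\ln w(\cos\theta)\,d\theta=\int_{-1}^{1}\frac{\ln w(x)}{\sqrt{1-x^2}}\,dx$ and $\int_0^{2\pi}\ln|\sin\theta|\,d\theta=-2\pi\ln 2$, I would check that the Szeg\"o condition (\ref{SzegoCondition}) holds \emph{iff} $\int_{-\pi}^{\pi}\ln W(\theta)\,d\theta>-\infty$, which is exactly the hypothesis guaranteeing that the Szeg\"o function
\[
D(z)=\exp\!\left[\frac{1}{4\pi}\int_{-\pi}^{\pi}\frac{e^{i\theta}+z}{e^{i\theta}-z}\,\ln W(\theta)\,d\theta\right]
\]
is a well-defined outer function on the disc with $|D(e^{i\theta})|^{2}=W(\theta)$ a.e.\ and $D(0)>0$. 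Then I would invoke the classical correspondence for even circle weights (Szeg\"o \cite{Szego}) expressing the interval orthonormal polynomial $\tilde p_{n}(\cos\theta)$ as a symmetric combination of $\varphi_{2n}(e^{i\theta})$ and $\varphi_{2n}(e^{-i\theta})$; expanding $x^{n}=2^{-n}(z+z^{-1})^{n}$ and matching the two highest Laurent coefficients in $z$ then converts the statement into one about the monic circle norm $\|\Phi_{2n}\|$ (for $\kappa_{n}$) and about the subleading coefficient of $\varphi_{2n}$, equivalently $(\varphi_{2n}^{*})'(0)$ with $\varphi_{m}^{*}(z):=z^{m}\overline{\varphi_{m}(1/\bar z)}$ (for $\kappa_{n}'/\kappa_{n}$).

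For the leading coefficient I would use the extremal characterisation $\|\Phi_{m}\|^{2}=\min\{\int_{-\pi}^{\pi}|p(e^{i\theta})|^{2}W(\theta)\tfrac{d\theta}{2\pi}:p\text{ monic of degree }m\}$ together with Szeg\"o's geometric-mean theorem, $\|\Phi_{m}\|^{2}\to\exp[\tfrac{1}{2\pi}\int_{-\pi}^{\pi}\ln W(\theta)\,d\theta]=D(0)^{2}$. Substituting $W=\pi\,w(\cos\theta)|\sin\theta|$ and unfolding $x=\cos\theta$ turns this limit into $\tfrac{\pi}{2}\exp[\tfrac{1}{\pi}\int_{-1}^{1}\frac{\ln w(x)}{\sqrt{1-x^2}}\,dx]$; carrying the factor $2^{-n}$ from $x^{n}=2^{-n}(z+z^{-1})^{n}$ and the $\sqrt{2\pi}$ from the interval--circle dictionary through, taking a square root and recalling $\kappa_{n}=\|\pi_{n}\|^{-1}$, I would recover (\ref{Knasin}).

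For the subleading coefficient the decisive extra input is the strong Szeg\"o asymptotics $\varphi_{m}^{*}(z)\to 1/D(z)$, uniformly on compact subsets of the open disc; since the Taylor coefficients at $0$ then also converge (Cauchy's formula), $(\varphi_{m}^{*})'(0)\to(1/D)'(0)=-D'(0)/D(0)^{2}$. From $\ln D(z)=\tfrac{1}{4\pi}\int\ln W\,d\theta+\tfrac{1}{2\pi}\sum_{k\ge 1}z^{k}\int e^{-ik\theta}\ln W(\theta)\,d\theta$ one reads $D'(0)/D(0)=\tfrac{1}{2\pi}\int_{-\pi}^{\pi}e^{-i\theta}\ln W(\theta)\,d\theta$, and because $W$ is even the only surviving contribution is $\tfrac{1}{\pi}\int_{-1}^{1}\frac{x\ln w(x)}{\sqrt{1-x^2}}\,dx$. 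Running this back through the correspondence of the first paragraph, together with the value of $\kappa_{n}$ just obtained, would yield (\ref{Kn'asin}); the minus sign and the $\pi^{3/2}$ fall out of the same normalisation constants once more.

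The transfer dictionary and the constant-chasing are routine bookkeeping; the genuinely hard part --- the step I expect to dominate the work --- is proving the two circle-side limits (convergence of $\|\Phi_{m}\|^{2}$ to the geometric mean, and $\varphi_{m}^{*}\to 1/D$ on compacta) under \emph{only} the log-integrability hypothesis (\ref{SzegoCondition}). This is the heart of Szeg\"o's theory: it rests on Hardy-space factorisation of $W$ through the outer function $D$, on Bernstein--Szeg\"o approximation of a general Szeg\"o-class weight by weights $1/|Q(e^{i\theta})|^{2}$ with $Q$ a polynomial (for which $\varphi_{m}$ and $\|\Phi_{m}\|$ are explicit), and on a uniform-estimate argument allowing passage to the limit; alternatively one may run the Riemann--Hilbert steepest-descent analysis of the associated $2\times 2$ matrix problem, which yields the same asymptotics with error bounds. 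Since either route needs machinery well beyond the elementary orthogonal-polynomial facts assembled above, I would ultimately cite these asymptotics in the stated form from \cite{Szego}.
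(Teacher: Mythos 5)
The paper offers no proof of its own here---it simply cites Chapter~XII of Szeg\"o's book---and your sketch is an accurate outline of precisely the argument found there: transfer to the unit circle via $x=\cos\theta$, the Szeg\"o function and the geometric-mean theorem for $\kappa_n$, and the strong asymptotics $\varphi_m^{*}\to 1/D$ on compacta for $\kappa_n'$, with the genuinely hard analytic core correctly isolated and, exactly as in the paper, ultimately cited to \cite{Szego}. Your constant-chasing is consistent with (\ref{Knasin})--(\ref{Kn'asin}): in particular $D'(0)/D(0)=\frac{1}{\pi}\int_{-1}^{1}\frac{x\ln w(x)}{\sqrt{1-x^{2}}}\,dx$ reproduces the ratio $\kappa_n'/\kappa_n\to-\frac{1}{2\pi}\int_{-1}^{1}\frac{x\ln w(x)}{\sqrt{1-x^{2}}}\,dx$ implied by the stated formulas.
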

\begin{proof} The proof can be founded in chapter XII of \cite{Szego}.
\end{proof}

\begin{proposition} The coefficients $A_n$, $B_n$ and $C_n$ have the asymptotic values
\begin{equation}
\lim_{n\rightarrow\infty}A_n=0,\quad \lim_{n\rightarrow\infty}B_n=1, \quad \lim_{n\rightarrow\infty}C_n=2.
\end{equation}
\end{proposition}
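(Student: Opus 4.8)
The plan is to read the three limits directly off the closed-form expressions \eqref{AkBkCk}, which give $A_k$, $B_k$, $C_k$ purely in terms of the leading and next-to-leading coefficients $\kappa_n$, $\kappa'_n$ of the orthonormal polynomials, and then to substitute Szeg\"o's asymptotics \eqref{Knasin}--\eqref{Kn'asin}. Abbreviate
\[
I=\int_{-1}^{1}\frac{\ln w(x)}{\sqrt{1-x^{2}}}\,dx,\qquad J=\int_{-1}^{1}\frac{x\ln w(x)}{\sqrt{1-x^{2}}}\,dx,
\]
so that \eqref{Knasin} reads $\kappa_n=\tfrac{2^{n}}{\sqrt{\pi}}e^{-I/(2\pi)}\,(1+o(1))$ and \eqref{Kn'asin} reads $\kappa'_n=-\tfrac{2^{n-1}}{\pi^{3/2}}J\,e^{-I/(2\pi)}\,(1+o(1))$, the latter understood as $\kappa'_n=o(2^{n})$ in the degenerate case $J=0$.

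The cases $C_n$ and $B_n$ are then immediate. For $C_n=\kappa_{n+1}/\kappa_n$ the common prefactor $\tfrac{1}{\sqrt{\pi}}e^{-I/(2\pi)}$ cancels in the ratio, leaving $\kappa_{n+1}/\kappa_n=2\,(1+o(1))$, hence $C_n\to 2$. For $B_n=\kappa_{n+1}\kappa_{n-1}/\kappa_n^{2}=(\kappa_{n+1}/\kappa_n)(\kappa_{n-1}/\kappa_n)$ the same cancellation, together with $2^{n+1}2^{n-1}/2^{2n}=1$, gives $B_n\to 2\cdot\tfrac12=1$.

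For $A_n=\dfrac{\kappa_{n+1}}{\kappa_n}\Bigl(\dfrac{\kappa'_n}{\kappa_n}-\dfrac{\kappa'_{n+1}}{\kappa_{n+1}}\Bigr)$ the key observation is that the ratio $\kappa'_n/\kappa_n$ tends to a limit \emph{independent of $n$}: dividing the two asymptotics gives
\[
\frac{\kappa'_n}{\kappa_n}=-\frac{J}{2\pi}+o(1).
\]
Hence the bracket equals $\bigl(-\tfrac{J}{2\pi}+o(1)\bigr)-\bigl(-\tfrac{J}{2\pi}+o(1)\bigr)=o(1)$, and multiplying by the bounded factor $\kappa_{n+1}/\kappa_n\to 2$ yields $A_n\to 0$.

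The one point that needs a little care — and the only genuine obstacle here — is precisely this last step: knowing that $\kappa'_n$ agrees with its Szeg\"o asymptotic only \emph{up to a multiplicative} $(1+o(1))$ is exactly what is required, because then $\kappa'_n/\kappa_n=-J/(2\pi)+o(1)$ and the \emph{difference} of two such terms, not merely each term, vanishes; if instead one had only an asymptotic ratio $\kappa'_n/(\text{formula})\to1$ with no control of the additive error, the cancellation producing $A_n\to0$ would not be guaranteed. In the degenerate case $J=0$ one simply has $\kappa'_n/\kappa_n=o(1)$ directly, so the conclusion still stands. (Alternatively, the same three limits follow from Proposition \ref{proposition1}, $A_k=\alpha_k/\sqrt{\beta_{k+1}}$, $B_k=\sqrt{\beta_k/\beta_{k+1}}$, $C_k=1/\sqrt{\beta_{k+1}}$, combined with the classical fact that $\alpha_k\to0$ and $\beta_k\to\tfrac14$ for Szeg\"o-class weights on $[-1,1]$; but the route through \eqref{AkBkCk} and \eqref{Knasin}--\eqref{Kn'asin} is the most direct given what has already been established.)
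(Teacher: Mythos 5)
Your proof is correct and follows the same route as the paper, which simply notes that the limits follow from substituting the Szeg\"o asymptotics \eqref{Knasin}--\eqref{Kn'asin} into the closed forms \eqref{AkBkCk}; you have merely carried out that substitution explicitly, including the careful treatment of the cancellation in $A_n$ and the degenerate case $J=0$.
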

\begin{proof}
It follows immediately from (\ref{AkBkCk}) and the previous result (\ref{Knasin}) and (\ref{Kn'asin}).
\end{proof}
The speed of convergence of $A_n$, $B_n$ and $C_n$ depends on the measure in a non-trivial fashion (see details \cite{Szego}). We will demonstrate this explicitly in equations (\ref{exenergies}) and (\ref{extn}) for specific spectral densities.

\begin{corollary} Let $\{\pi_n(x)\in \mathbb{P}_n,n=0,1,2,\ldots\}$ be a set of orthogonal monic polynomials with respect to some measure $d\mu(x)$ which belongs to the Szeg\"o class, then their recurrence coefficients have the asymptotic values
\[
\lim_{n\rightarrow\infty}\alpha_n=0, \quad \lim_{n\rightarrow\infty}\beta_n=\frac{1}{4}.
\]
\end{corollary}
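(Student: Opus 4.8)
The plan is to reduce everything to the asymptotics of the orthonormal recurrence coefficients $A_n$, $B_n$, $C_n$ that were just established, and then simply invert the relations of Proposition \ref{proposition1}. Since a measure of the Szeg\"o class is by definition supported on $[-1,1]$, no rescaling is needed: the orthonormal polynomials $\tilde p_n(x)=\pi_n(x)/\|\pi_n\|$ obey the recurrence (\ref{normalrecurrence}) with coefficients satisfying $\lim_{n\to\infty}A_n=0$, $\lim_{n\to\infty}B_n=1$, $\lim_{n\to\infty}C_n=2$.

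First I would use the relation $C_k=1/\sqrt{\beta_{k+1}}$ from (\ref{monic-normal}), which inverts to $\beta_{k+1}=C_k^{-2}$. Because $C_k\to 2$ with $C_k>0$, and $t\mapsto t^{-2}$ is continuous and nonvanishing at $t=2$, we get $\beta_{k+1}\to 1/4$, i.e. $\lim_{n\to\infty}\beta_n=1/4$, and in particular $\sqrt{\beta_{k+1}}\to 1/2$. Next, from $A_k=\alpha_k/\sqrt{\beta_{k+1}}$ one has $\alpha_k=A_k\sqrt{\beta_{k+1}}$, a product of a sequence tending to $0$ and one tending to $1/2$, so $\alpha_k\to 0$. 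As an internal consistency check one can also read off $B_k=\sqrt{\beta_k/\beta_{k+1}}\to\sqrt{(1/4)/(1/4)}=1$, in agreement with the known limit.

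There is essentially no real obstacle here once the Szeg\"o theorem and the preceding proposition are available; the only points deserving a word of care are that $\beta_{k+1}>0$ for any nontrivial measure (immediate from $\beta_k=\|\pi_k\|^2/\|\pi_{k-1}\|^2>0$), so that $\beta_{k+1}=C_k^{-2}$ and the square roots are legitimate, and that the limit of a product or quotient of convergent sequences with nonzero denominator limit is the corresponding product or quotient of limits. If one wished to state the result for a Szeg\"o-type measure supported on a general interval $[a,b]$, one would additionally invoke the rescaling proposition, $\alpha'_k=m\alpha_k+c$ and $\beta'_k=m^2\beta_k$, to transfer the limits; but as phrased the corollary concerns the canonical interval $[-1,1]$, so that step is not needed.
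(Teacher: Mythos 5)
Your argument is correct and is precisely the route the paper intends: its proof consists only of the pointer ``See Proposition \ref{proposition1},'' i.e.\ invert the relations $A_k=\alpha_k/\sqrt{\beta_{k+1}}$, $B_k=\sqrt{\beta_k/\beta_{k+1}}$, $C_k=1/\sqrt{\beta_{k+1}}$ using the limits $A_n\to 0$, $B_n\to 1$, $C_n\to 2$ from the preceding proposition, exactly as you do. Your version simply makes explicit the positivity of the $\beta_k$ and the continuity arguments that the paper leaves unstated.
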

\begin{proof} See proposition \ref{proposition1}.
\end{proof}

\begin{proposition} Let $\{\pi_n(x)\in \mathbb{P}_n,n=0,1,2,\ldots\}$ be a set oforthogonal monic polynomials with respect to some Szeg\"o class measure, then asymptotic values of the monic recursion coefficients on any other finite support $[a,b]$ are
\[
\lim_{n\rightarrow\infty}\alpha'_n=c=\frac{a+b}{2}, \quad \lim_{n\rightarrow\infty}\beta'_n=\frac{m^2}{4}=\frac{(b-a)^2}{16}.
\]
\end{proposition}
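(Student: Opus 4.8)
The plan is to obtain the result immediately by combining the two facts already in hand: the transformation law (\ref{a'kb'k}) for the recurrence coefficients under an affine change of support, and the Szeg\"o-class asymptotics established in the preceding corollary.

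First I would observe that the hypothesis places us in exactly the situation of the proposition that produced (\ref{a'kb'k}): the monic orthogonal family $\{\pi_n\}$ belongs to a Szeg\"o-class measure $d\mu(x)=w(x)dx$ on $[-1,1]$, and the family $\{\pi'_n\}$ on $[a,b]$ is obtained via the linear map $y=mx+c$ with $m=(b-a)/2$ and $c=(a+b)/2$. That proposition gives, for every $k$,
\[
\alpha'_k=m\,\alpha_k+c,\qquad \beta'_k=m^2\,\beta_k .
\]
Next I would invoke the corollary, which asserts $\lim_{n\rightarrow\infty}\alpha_n=0$ and $\lim_{n\rightarrow\infty}\beta_n=\frac{1}{4}$ for any orthogonal monic family attached to a Szeg\"o-class measure on $[-1,1]$. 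Since $m$ and $c$ are fixed constants and the two relations above are affine in $\alpha_k$ and linear in $\beta_k$, the limits pass straight through, yielding $\lim_{n\rightarrow\infty}\alpha'_n=m\cdot 0+c=c=(a+b)/2$ and $\lim_{n\rightarrow\infty}\beta'_n=m^2\cdot\frac{1}{4}=(b-a)^2/16$, which is the assertion.

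There is essentially no obstacle here; the one point worth a line of care is to keep straight which measure is assumed to satisfy the Szeg\"o condition. That condition is formulated only for measures supported on $[-1,1]$, and it is the original measure $d\mu$ on $[-1,1]$ — not its affine image on $[a,b]$ — that is assumed to be Szeg\"o class. All the asymptotic input therefore enters through the corollary applied on $[-1,1]$, after which (\ref{a'kb'k}) does the rest, so no further estimation is required.
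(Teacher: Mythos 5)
Your proposal is correct and follows essentially the same route as the paper: combine the affine transformation law $\alpha'_k=m\alpha_k+c$, $\beta'_k=m^2\beta_k$ from (\ref{a'kb'k}) with the corollary's Szeg\"o-class asymptotics $\alpha_n\rightarrow 0$, $\beta_n\rightarrow\frac{1}{4}$ on $[-1,1]$, and pass to the limit. The paper's proof merely adds the remark that the non-singular linear change of variable does not spoil the Szeg\"o condition, which is the same bookkeeping point you address by insisting that the Szeg\"o hypothesis is applied to the original measure on $[-1,1]$.
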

\begin{proof} It is evident that if (\ref{SzegoCondition}) is satisfied, the linear transformation $x\rightarrow m x+c$ which changes the measure support to $[a,b]$ does not affect to its fulfilment since it is not singular in the interval, the rest is a simple application of the result (\ref{a'kb'k}).
\end{proof}

\section{System-Reservoir structures}
\label{sec;system-reservoir}

The observables in an open quantum system are affected by the unavoidable interaction with the environment. This environment may be described by an infinite number (often called a reservoir) of bosonic or femionic modes labeled by some real number. The internal dynamics of the reservoir is given by some Hamiltonian of the form
\[
H_{\mathrm{res}}=\int_{0}^{x_{\mathrm{max}}}dx g(x)a_{x}^{\dagger}a_{x},
\]
where in a physical context $x$ could represent some continuous real variable such as the momentum of each mode, and $x_{\mathrm{max}}$ the maximum value of it which is present in the reservoir (it could be infinity). In this picture $g(x)$ represents the dispersion relation of the reservoir which relates the oscillator frequency to the variable $x$. The creation and annihilation operators satisfy the continuum bosonic $[a_{x},a_{y}^{\dagger}]=\delta(x-y)$ or fermionic $\{a_{x},a_{y}^{\dagger}\}=\delta(x-y)$ commutation rules. We assume that the frequencies $g(x)$ and momenta $x$ of the reservoir are bounded.

The internal dynamics of the open quantum system are described by a unspecified local Hamiltonian operator $H_{\mathrm{loc}}$ and we assume that the interaction between the system and the reservoir is given by a linear coupling
\[
V=\int_{0}^{x_{\mathrm{max}}}dxh(x)\hat{A}(a_{x}+a^{\dagger}_{x}),
\]
where $\hat{A}$ is some operator of the system and the coupling (real) function $h(x)$ describes the coupling strength with each mode. The total Hamiltonian for system plus reservoir is then given by
\begin{equation}\label{totalHamiltonian}
H=H_{\mathrm{loc}}+H_{\mathrm{res}}+V=H_{\mathrm{loc}}+\int_{0}^{x_{\mathrm{max}}}dx g(x)a_{x}^{\dagger}a_{x}+\int_{0}^{x_{\mathrm{max}}}dxh(x)\hat{A}(a_{x}+a_{x}^{\dagger}).
\end{equation}

It has been shown that the dynamics induced in the quantum system by its interaction with the reservoir is completely determined by a positive function of the energy (or frequency $\omega$) of the oscillators called the spectral density $J(\omega)$ \cite{weiss,leggett}. For the continuum model of the reservoir we are considering, this function is given by,
\begin{equation}\label{Jhg}
J(\omega)=\pi h^2[g^{-1}(\omega)]\frac{dg^{-1}(\omega)}{d\omega},
\end{equation}
where $g^{-1}[g(x)]=g[g^{-1}(x)]=x$. Physically, $\frac{d g^{-1}(\omega)}{d\omega} \delta\omega$, can be interpreted as the number of quanta with frequencies between $\omega$ and $\omega+\delta\omega$ as $\delta\omega\rightarrow 0$ i.e. it represents the density of states of the reservoir in frequency space. The spectral function thus describes the overall strength of the interaction of the system with the reservoir modes of frequency $\omega$.

This physical introduction motivate the next mathematical definition.

\begin{definition} We define the spectral density of a reservoir as a real non-negative and integrable
function $J(\omega)$ inside of its (real positive) domain, which could be the entire half-line $\omega\in[0,\infty)$.
\end{definition}

Of course, given only a spectral density $J(\omega)$, the dispersion relation $g(x)$ and the coupling function $h(x)$ are not uniquely defined. We shall make use of this freedom to implement a particularly simple transformation of the bosonic modes, and we will chose the dispersion function to be linear $g(x)=gx$. We now move on to our main theorem.

\begin{theorem}\label{mainResult} A system linearly coupled with a reservoir characterized by a spectral density $J(\omega)$ is unitarily equivalent to semi-infinite chain with only nearest-neighbors interactions, where the system only couples to the first site in the chain (see figure \ref{fig1}). In other words, there exist an unitary operator $U_n(x)$ such that the countably infinite set of new operators
\begin{equation}\label{bn}
b_{n}^{\dagger}=\int_{0}^{x_{\mathrm{max}}}dx\,U_{n}(x)a_{x}^{\dagger},
\end{equation}
satisfy the corresponding commutation relations $[b_n,b^\dagger_m]=\delta_{nm}$ for bosons, and $\{b_n,b^\dagger_m\}=\delta_{nm}$ for fermions, with transformed Hamiltonian
\begin{equation}
H'=\int_{0}^{x_{\mathrm{max}}}dx\,U_{n}(x)H=H_{\mathrm{loc}}+c_0\hat{A}(b_{0}+b_{0}^{\dagger})+\sum_{n=0}^\infty\omega_{n}b_{n}^{\dagger}b_{n}+t_{n}b_{n+1}^{\dagger}b_{n}+t_{n}b_{n}^{\dagger}b_{n+1}\label{hc},
\end{equation}
where $c_0$, $t_{n}$, $\omega_{n}$ are real constants.
\end{theorem}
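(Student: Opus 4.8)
The plan is to exploit the fact that the three-term recurrence (\ref{normalrecurrence}) is exactly the statement that multiplication by $x$ is tridiagonal in a basis of orthonormal polynomials; once the reservoir modes are relabelled by such a basis, the chain form of $H$ and the identification of its parameters with the recurrence coefficients become immediate. Concretely, I would first use the freedom noted after (\ref{Jhg}) to fix the dispersion to be linear, $g(x)=gx$, and introduce the measure $d\mu(x)=h^2(x)\,dx$ on $[0,x_{\mathrm{max}}]$. Since the reservoir momenta and frequencies are bounded, this measure has compact support and hence finite moments of all orders, so the existence theorem supplies a unique monic orthogonal family $\{\pi_n\}$ and its orthonormalisation $\tilde p_n=\pi_n/\|\pi_n\|$, with recurrence coefficients $\alpha_n,\beta_n$ and $A_n,B_n,C_n$ linked by Proposition \ref{proposition1}. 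I would then take $U_n(x)=h(x)\tilde p_n(x)$ as the kernel in (\ref{bn}).

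Next I would verify that this $U_n$ really defines a unitary transformation. The isometry part is routine: $\int_0^{x_{\mathrm{max}}}U_n(x)U_m(x)\,dx=\langle\tilde p_n,\tilde p_m\rangle=\delta_{nm}$, and feeding the continuum (anti)commutators $[a_x,a_y^\dagger]=\delta(x-y)$ (resp.\ $\{a_x,a_y^\dagger\}=\delta(x-y)$) through (\ref{bn}) gives $[b_n,b_m^\dagger]=\delta_{nm}$ (resp.\ $\{b_n,b_m^\dagger\}=\delta_{nm}$) directly. The point that needs genuine care — and which I expect to be the main obstacle — is completeness: one must know that $\{U_n\}$ spans $L^2([0,x_{\mathrm{max}}],dx)$ so that the inverse relation $a_x^\dagger=\sum_n U_n(x)b_n^\dagger$ holds and no reservoir degrees of freedom are lost. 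I would obtain this from the density of polynomials in $L^2(d\mu)$ on a compact interval (Weierstrass), which makes $\{\tilde p_n\}$ an orthonormal basis of $L^2(d\mu)$ and yields the closure relation $\sum_n U_n(x)U_n(y)=\delta(x-y)$ on $\{x:h(x)\neq0\}$; any modes with $h(x)=0$ are decoupled from $\hat A$ and can simply be omitted from the reservoir.

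With unitarity in hand the rest is a direct substitution of $a_x^\dagger=\sum_n U_n(x)b_n^\dagger$ into (\ref{totalHamiltonian}). The operators $H_{\mathrm{loc}}$ and $\hat A$ contain no bath operators and are unchanged. For the interaction I would use $\int_0^{x_{\mathrm{max}}}h(x)U_n(x)\,dx=\langle 1,\tilde p_n\rangle=\|1\|\,\delta_{n0}$, giving $c_0\hat A(b_0+b_0^\dagger)$ with $c_0=\|1\|=(\int_0^{x_{\mathrm{max}}}h^2\,dx)^{1/2}=(\pi^{-1}\int J(\omega)\,d\omega)^{1/2}$. For the free reservoir term I would compute $\int_0^{x_{\mathrm{max}}}g(x)U_n(x)U_m(x)\,dx=g\langle x\tilde p_n,\tilde p_m\rangle$ and rewrite (\ref{normalrecurrence}) as $x\tilde p_n=C_n^{-1}\tilde p_{n+1}+(A_n/C_n)\tilde p_n+(B_n/C_n)\tilde p_{n-1}$; orthonormality then makes this matrix tridiagonal, producing $\sum_n\omega_n b_n^\dagger b_n+t_n(b_{n+1}^\dagger b_n+b_n^\dagger b_{n+1})$ with $\omega_n=gA_n/C_n$ and $t_n=g/C_n$. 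Using (\ref{monic-normal}) these simplify to $\omega_n=g\alpha_n$ and $t_n=g\sqrt{\beta_{n+1}}$, and a one-line check that the sub- and super-diagonal entries agree ($gB_{n+1}/C_{n+1}=g\sqrt{\beta_{n+1}}=t_n$) shows the chain Hamiltonian is Hermitian. Reality of $c_0,\omega_n,t_n$ is then clear since $g$, $h$ and the coefficients $\alpha_n$, $\beta_n>0$ are real, which is (\ref{hc}). The only loose ends I would still want to tidy up are the harmless bookkeeping when $h$ vanishes on part of the interval, and the precise operator-domain sense in which the infinite chain Hamiltonian is defined, for which the boundedness of $\alpha_n,\beta_n$ established in the previous section is reassuring.
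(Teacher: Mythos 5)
Your proposal follows essentially the same route as the paper: the kernel $U_n(x)=h(x)\tilde{p}_n(x)$ built from the orthonormal polynomials of the measure $d\mu(x)=h^2(x)\,dx$, orthonormality to get the (anti)commutation relations and to collapse the coupling term to $c_0=\|\pi_0\|$, and the three-term recurrence to tridiagonalize the free reservoir term, yielding $\omega_n=g\alpha_n$ and $t_n=g\sqrt{\beta_{n+1}}$ exactly as in the paper. Your additional care about completeness of $\{\tilde{p}_n\}$ in $L^2(d\mu)$ (and about points where $h$ vanishes) tightens a step the paper's proof takes for granted when writing the inverse transformation, but it does not constitute a different argument.
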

\begin{proof} The proof is by construction. Since $J(\omega)$ is positive, $h(x)$ is real, this defines the measure $d\mu(x)=h^2(x)dx$. Then write
\begin{equation}\label{transf}
U_n(x)=h(x)\tilde{p}_n(x)=h(x)\frac{\pi_n(x)}{\|\pi_n\|},
\end{equation}
where $\tilde{p}_n(x)$ are some set of orthonormal polynomials with respect to the measure $d\mu(x)=h^2(x)dx$ with support on $[0,x_{\mathrm{max}}]$. Then it is clear that $U_n(x)$ is unitary (actually orthogonal as it is also a real transformation) in the sense of
\begin{eqnarray*}
\int_0^{x_{\mathrm{max}}}dxU_n(x)U_m^\ast(x)&=&\int_0^{x_{\mathrm{max}}}dxU_n(x)U_m(x)\\
&=&\int_0^{x_{\mathrm{max}}}dxh^2(x)\tilde{p}_n(x)\tilde{p}_m(x)=\delta_{nm},
\end{eqnarray*}
so the inverse transformation is just
\begin{equation}\label{ax}
a_{x}^{\dagger}=\sum_{n}U_{n}(x)b_{n}^{\dagger}.
\end{equation}
Moreover, for bosons
\begin{eqnarray*}
[b_n,b_{m}^{\dagger}]&=&\int_{0}^{x_{\mathrm{max}}}\int_{0}^{x_{\mathrm{max}}}dxdx'U_{n}(x)U_{m}(x')[a_x,a_{x'}^{\dagger}]\\
&=&\int_{0}^{x_{\mathrm{max}}}\int_{0}^{x_{\mathrm{max}}}dxdx'U_{n}(x)U_{m}(x')\delta(x-x')\\
&=&\int_{0}^{x_{\mathrm{max}}}dxU_{n}(x)U_{m}(x)=\delta_{nm}.
\end{eqnarray*}
and similarly it is proved that $\{b_n,b^\dagger_m\}=\delta_{nm}$ for fermions. It remains to determine the structure of the transformed Hamiltonian $H'$, note that $V$ is transformed like:
\[
V=\hat{A}\sum_n\int_{0}^{x_{\mathrm{max}}}dxh(x)U_{n}(x)(b_{n}+b_{n}^{\dagger})=\hat{A}\sum_n\int_{0}^{x_{\mathrm{max}}}dxh^2(x)\frac{\pi_n(x)}{\|\pi_n\|}(b_{n}+b_{n}^{\dagger}),
\]
since for monic polynomials $\pi_0(x)=1$ we find
\begin{eqnarray*}
V&=&\hat{A}\sum_n\int_{0}^{x_{\mathrm{max}}}dxh^2(x)\frac{\pi_n(x)\pi_0(x)}{\|\pi_n\|}(b_{n}+b_{n}^{\dagger})\\
&=&\hat{A}\sum_n\frac{\|\pi_n\|^2\delta_{n0}}{\|\pi_n\|}(b_{n}+b_{n}^{\dagger})=\|\pi_0\|\hat{A}(b_{0}+b_{0}^{\dagger}),
\end{eqnarray*}
so $c_0=\|\pi_0\|$. For the $H_{\mathrm{res}}$ term, note that with the choice of linear dispersion function $g(x)=gx$ for the spectral density $J(\omega)$, one obtains
\begin{eqnarray*}
H_{\mathrm{res}}&=&\sum_{n,m}\int_{0}^{x_{\mathrm{max}}}dx g(x)U_{n}(x)U_{m}(x)b_{n}^{\dagger}b_{m}\\
&=&\sum_{n,m}\int_{0}^{x_{\mathrm{max}}}dx h^2(x) g(x)\tilde{p}_n(x)\tilde{p}_m(x) b_{n}^{\dagger}b_{m}\\
&=&g\sum_{n,m}\int_{0}^{x_{\mathrm{max}}}dx x h^2(x)\tilde{p}_n(x)\tilde{p}_m(x)b_{n}^{\dagger}b_{m}.
\end{eqnarray*}
With the recurrence relation (\ref{normalrecurrence}) we substitute the value of $x \tilde{p}_n(x)$ in the above integral to find
\[
H'_{\mathrm{res}}=g\sum_{n,m}\int_{0}^{x_{\mathrm{max}}}dx h^2(x)\left[\frac{1}{C_n}\tilde{p}_{n+1}(x)+\frac{A_n}{C_n}\tilde{p}_{n}(x)+\frac{B_n}{C_n}\tilde{p}_{n-1}(x)\right]\tilde{p}_m(x)b_{n}^{\dagger}b_{m}
\]
then orthonormality yields
\begin{eqnarray*}
H'_{\mathrm{res}}&=&g\sum_{n}\frac{1}{C_n}b_{n}^{\dagger}b_{n+1}+\frac{A_n}{C_n}b_{n}^{\dagger}b_{n}+\frac{B_{n+1}}{C_{n+1}}b_{n+1}^{\dagger}b_{n}\\
&=&g\sum_{n}\sqrt{\beta_{n+1}}b_{n}^{\dagger}b_{n+1}+\alpha_nb_{n}^{\dagger}b_{n}+\sqrt{\beta_{n+1}}b_{n+1}^{\dagger}b_{n}
\end{eqnarray*}
where we have used the relation between monic and orthogomal recurrence coefficients (\ref{monic-normal}). So finally we have
\begin{equation*}
\omega_n=g\alpha_{n},
\end{equation*}
\begin{equation*}
t_n=g\sqrt{\beta_{n+1}}.
\end{equation*}
\end{proof}

\begin{figure}
\begin{center}
\psfig{file=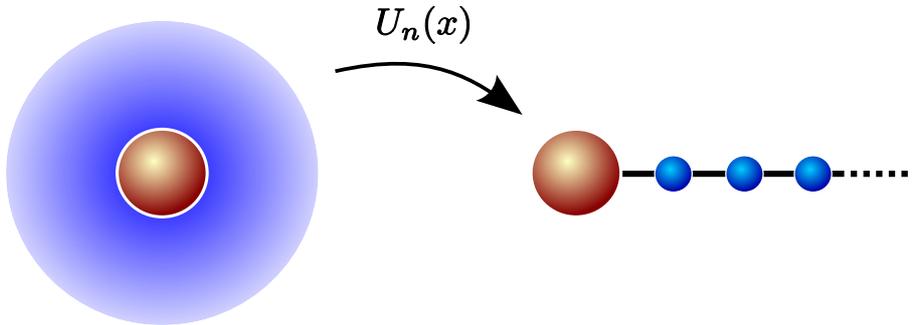,width=\textwidth}
\end{center}
\caption{Illustration of the effect of the transformation $U_n(x)$. On the left side system central is interacting with a reservoir with continuous bosonic or fermionic modes, paremetrized by $x$, after $U_n(x)$ the system is the first place of a discrete semi-infinite chain parametrized by $n$.}
\label{fig1}
\end{figure}

\begin{remark}Note that from a topological point of view, mapping a continuous reservoir into a discrete chain is nothing but accounting for the separability of the topological space $\mathbb{P}$ (or the continuous functions space, if one prefers), and therefore the existence of a numerable basis.
\end{remark}

\begin{remark}In addition, the theorem provides us with a way to construct the exact mapping for any spectral density by simply looking for the family of orthogonal polynomials with respect to the measure $d\mu(x)=h^2(x)dx$. This can be done analytically for many important cases, as we demonstrate in section \ref{sec;examples1} with the SBM, however even if the weight $h^2(x)$ is a complicated function, families of orthogonal polynomials can be founded by using very stable numerical algorithms such as the ORTHPOL package \cite{ORTHPOL}.
\end{remark}

\begin{definition} We will say that a spectral density $J(\omega)$ belongs to the Szeg\"o class if the measure which it induced by $d\mu(x)=h^2(x)dx$, with $g(x)=gx$, belongs to the Szeg\"o class.
\end{definition}

\begin{corollary} For a Szeg\"o class spectral density $J(\omega)$ the tail of the semi-infinite chain tends to a translational invariant chain with
\begin{equation*}
\lim_{n\rightarrow\infty}\omega_n=2\lim_{n\rightarrow\infty}t_n=\frac{gx_{\mathrm{max}}}{4}
\end{equation*}
\end{corollary}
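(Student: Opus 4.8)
The plan is to deduce the statement directly from the asymptotics of the monic recurrence coefficients established in Section~\ref{sec:orthogonalpolynomials} for Szeg\"o-class measures. Recall from the proof of Theorem~\ref{mainResult} that the chain parameters are
\begin{equation*}
\omega_n=g\,\alpha_n,\qquad t_n=g\,\sqrt{\beta_{n+1}},
\end{equation*}
where $\alpha_n,\beta_n$ are the recurrence coefficients of the monic orthogonal polynomials for the measure $d\mu(x)=h^2(x)\,dx$. By the standing assumption that the reservoir momenta are bounded, this measure is supported on the \emph{finite} interval $[0,x_{\mathrm{max}}]$, and the hypothesis that $J(\omega)$ is of Szeg\"o class means, by definition, precisely that $d\mu$ is a Szeg\"o-class measure on that interval. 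So the corollary reduces to the claim that $\alpha_n$ and $\beta_n$ converge, together with the identification of their limits.

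First I would apply the last Proposition of Section~\ref{sec:orthogonalpolynomials} (the asymptotics of the monic recursion coefficients over a general finite support $[a,b]$), which combines $\alpha_n\to0$, $\beta_n\to1/4$ on $[-1,1]$ with the affine-rescaling rule~(\ref{a'kb'k}), $\alpha'_k=m\alpha_k+c$, $\beta'_k=m^2\beta_k$. Taking $[a,b]=[0,x_{\mathrm{max}}]$ gives $c=m=x_{\mathrm{max}}/2$, hence $\lim_n\alpha_n=c=x_{\mathrm{max}}/2$ and $\lim_n\beta_n=m^2/4=x_{\mathrm{max}}^2/16$. Since a shift of the index by one leaves a limit unchanged, $\lim_n\beta_{n+1}$ has the same value, and multiplying by $g$ yields $\lim_n\omega_n=gx_{\mathrm{max}}/2$ and $\lim_n t_n=gx_{\mathrm{max}}/4$, which are in the asserted ratio $\lim_n\omega_n=2\lim_n t_n$. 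The mere existence of these limits is what makes the tail of the chain translationally invariant: far down the chain every site carries the common frequency $\omega_\infty$ and every bond the common hopping $t_\infty$.

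There is no real obstacle here; the only care needed is in the bookkeeping and in checking that the hypotheses line up. Two points deserve attention: (i) ``$J$ of Szeg\"o class'' has been \emph{defined} through the induced measure $h^2(x)\,dx$ with the linear dispersion $g(x)=gx$, so the Proposition applies verbatim, using that the logarithmic integral~(\ref{SzegoCondition}) is unaffected by the non-singular affine rescaling of the support; and (ii) boundedness of the support is essential --- for measures with unbounded support (Laguerre, Hermite) the coefficients $\alpha_n,\beta_n$ diverge and no translationally invariant tail exists. As a sanity check, the limiting uniform nearest-neighbour chain, with on-site energy $\omega_\infty$ and hopping $t_\infty$, has tight-binding band $[\omega_\infty-2t_\infty,\ \omega_\infty+2t_\infty]=[0,\,gx_{\mathrm{max}}]$, which is exactly the range $g(x)\in[0,gx_{\mathrm{max}}]$ of reservoir frequencies --- the tail reproduces the reservoir's spectral support, as expected physically.
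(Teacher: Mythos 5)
Your proof is correct and is essentially the argument the paper intends: the corollary is stated there without proof, but it is exactly the combination you use of the preceding proposition on the asymptotics of $\alpha'_n,\beta'_n$ over the support $[0,x_{\mathrm{max}}]$ with the identifications $\omega_n=g\alpha_n$ and $t_n=g\sqrt{\beta_{n+1}}$ from Theorem~\ref{mainResult}. One point worth flagging explicitly: your (correct) computation yields the common value $\lim_{n\to\infty}\omega_n=2\lim_{n\to\infty}t_n=gx_{\mathrm{max}}/2$, not $gx_{\mathrm{max}}/4$ as the displayed chain of equalities in the corollary literally reads --- the statement contains a factor-of-two slip, which is confirmed by the paper's own hard-cutoff example where $\lim\omega_n=\omega_c/2$ and $\lim t_n=\omega_c/4$, so you should not gloss over this by checking only the ratio.
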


\begin{remark}Typical examples of Szeg\"o class spectral density are those strictly positive in its domain $J(g(x))>0$ for $x\in[0,x_{\mathrm{max}}]$. These includes a wide range of spectral densities with practical interest, particularly the ones used for the SBM discussed in the following section. \end{remark}

\begin{remark}
The asymptotic properties of the frequencies and couplings for $J(\omega)$ in the Szeg\"o class have an important physical implication for open-quantum systems. Away from the quantum system, the chain becomes effectively translationally invariant and it is a standard exercise in condensed matter physics to diagonalize such a chain in terms of plane waves~\cite{SimonsAltland}. For chains derived from a Szeg\"o class $J(\omega)$, the dispersion $\omega_{k}$ of the eigenstates of the asymptotic region of the chain is $\omega_{k}=\omega_{c}(1-\cos(\pi k))/2$ where the momentum runs from $0$ to $1$ $(\omega_c=g(1))$. This result shows rigorously that the asymptotic region can support excitations over the full spectral range of the original environment, as one would intuitively expect on physical grounds. The translational invariance of the asymptotic region means that excitations of the environment can propagate away from the quantum system without scattering and are effectively ``lost'' irreversibly to the environment over time. One could also consider this asymptotic region of the chain as a secondary environment, in which case the system bath Hamiltonian can be recast as a finite chain in which the final mode is damped by the translationally invariant chain. In this picture the damping of the final member of the chain is universal for all spectral densities in the Szeg\"o class. Such a truncated picture might allow for improved efficiency in the numerical simulation of open-system dynamics, and the detailed physics behind this intuitively appealing picture of relaxation and dephasing in the chain picture will be dealt with elsewhere.
\end{remark}

\begin{remark}
A spectral function that appears in nature, most notably in photonic crystals, but which does not fall into the Szeg\"o class would be a spectral function that contains gaps, for instance
\[
J(\omega)=[J_{1}(\omega)\theta(\omega_{1}-\omega)+J_{2}(\omega)\theta(\omega-\omega_{2})]\theta(\omega_{c}-\omega).
\]
Such a spectral function is zero in the region $\omega_{1}\leq\omega\leq\omega_{2}$. However, if $J_{1,2}(\omega)$ are in the Szeg\"o class over $[0,\omega_{1}]$ and $[\omega_{2},\omega_{c}]$ respectively, then the environment can be considered as two separate chains, for which the theorems above apply for each separate chain.
\end{remark}

\section{Applications}\label{sec;examples1}

\subsection{Continuous Spin-Boson spectral densities}\label{sec;examples11}
As a prototypical example let us consider the spin-boson Hamiltonian which describes the interaction of a TLS with an environment of harmonic oscillators \cite{weiss,leggett}
\begin{equation}
H_{\mathrm{SB}}=H_{\mathrm{loc}}+\int_{0}^{x_\mathrm{max}}dxg(x)a_{x}^{\dagger}a_{x}+\frac{1}{2}\sigma_{3}\int_{0}^{x_\mathrm{max}}dxh(x)(a_{x}+a_{x}^{\dagger})\label{hsb},
\end{equation}
where $a_{x}$ are bosonic operators, and the local Hamiltonian of the spin is given by,
\begin{equation}
H_{\mathrm{loc}}=\frac{1}{2}\eta\sigma_{1}+\frac{1}{2}\epsilon\sigma_{3}.
\end{equation}
where $\sigma_1$ and $\sigma_3$ are the corresponding Pauli matrices.

Firstly we will consider spectral functions bounded by a hard cut-off at an energy $\omega_{c}$, hence the cut-off $x_\mathrm{max}=g^{-1}(\omega_c)$ appearing in the integrals in (\ref{hsb}), which are usually parameterized as \cite{bulla}
\begin{equation}
J(\omega)=2\pi\alpha\omega_{c}^{1-s}\omega^{s}\theta(\omega-\omega_c)\label{jsb1},
\end{equation}
where $\alpha$ is the dimensionless coupling strength of the system-bath interaction and $\theta(\omega-\omega_c)$ denotes the Heaviside step function. In the SBM literature, spectral functions with $s>1$ are referred to as super-ohmic, $s=1$ as Ohmic, and $s<1$ as sub-Ohmic. The sub-ohmic and Ohmic cases are known to show interesting critical behaviour at zero temperaure as a function of $\alpha$ \cite{weiss,leggett}. According to (\ref{Jhg}) and our convention in taking linear dispersion relations, this continuous spectral function is related to the Hamiltonian parameters by,
\begin{eqnarray}
g(x)&=&\omega_{c}x,\\
h(x)&=&\sqrt{2\alpha}\omega_{c}x^{s/2}\label{h}.
\end{eqnarray}

With this choice of $g(x)$ and $h(x)$, $x_\mathrm{max}=1$ and absorbing the common factor $\omega_{c}\sqrt{2\alpha}$ in the system operator $\hat{A}=\omega_{c}\sqrt{2\alpha}\sigma_3$, the following matrix elements generate the mapping onto the chain,
\begin{equation}
U_{n}(x)=x^{s/2}\tilde{P}_{n}^{(0,s)}(x),
\end{equation}
here $\tilde{P}_{n}^{(0,s)}(x)=P_{n}^{(0,s)}(x)N_{n}^{-1}$ is a (normalized) shifted Jacobi polynomial (from support [-1,1] to support [0,1]) for which the orthogonality condition reads
\begin{eqnarray}
\int_{0}^{1}dxx^{s}P_{n}^{(0,s)}(x)P_{m}^{(0,s)}(x)&=&\delta_{nm}N_{n}^{2}\\
N_{n}^{2}&=&\frac{1}{2n+s+1}.
\end{eqnarray}
Therefore we get
\[
\hat{A}\int_{0}^{1}dxh(x)(a_{x}+a_{x}^{\dagger})=\frac{1}{2}\sigma_{3}\sqrt{\frac{\eta_{0}}{\pi}}(b_{0}+b_{0}^{\dagger})
\]
where $\eta_{0}=\int_{0}^{\omega_{c}}J(\omega)\,d\omega$ as defined by Bulla {\it et al.} \cite{bulla} in their NRG study of the SBM. They also use a mapping of the SBM onto a chain, however in their case the mapping is performed on an approximate discretised representation of the oscillator reservoir - see section \ref{sec;discrete}.

For the transformed of $\int_{0}^{x_\mathrm{max}}dxg(x)a_{x}^{\dagger}a_{x}$ we obtain from the recursion coefficients of the Jacobi polynomials \cite{Handbook}, frequencies and tunneling amplitudes to be
\begin{eqnarray}
\omega_{n}&=&\frac{\omega_{c}}{2}\left(1 + \frac{s^{2}}{(s + 2n) (2 + s + 2 n)}\right)\label{exenergies},\\
\nonumber\\
t_{n}&=&\frac{\omega_{c}(1 + n) (1 + s + n) }{(s + 2 +2n) (3 + s + 2n)}\sqrt{\frac{3 + s + 2n }{ 1 + s + 2 n}}.\label{extn}
\end{eqnarray}

It is quite easy to check that the measure $d\mu(x)=x^{s}dx$ belongs to the Szeg\"o class for any $s<\infty$, the integral (\ref{SzegoCondition}) on $[0,1]$ is
\[
\int_0^1\frac{\ln(x^s)dx}{\sqrt{1-(2x-1)^2}}=-\pi s\ln(2).
\]
Thus $\lim_{n\rightarrow\infty}\alpha_n=\frac{1}{2}$ and $\lim_{n\rightarrow\infty}\beta'_n=\frac{m^2}{4}=\frac{1}{16}$, so:
\[
\lim_{n\rightarrow\infty}\omega_{n}=\frac{\omega_c}{2}, \quad \lim_{n\rightarrow\infty}t_{n}=\frac{\omega_c}{4},
\]
as it can be seen directly from the expressions (\ref{exenergies}) and (\ref{extn}). Interestingly, smaller values of $s$ lead to faster convergence of $\omega_{n},t_{n}$ towards these limits as $n$ increases.

If instead of a hard cut-off we use an exponential cut-off for the spectral function
\[
J(\omega)=2\pi\alpha\omega_{c}^{1-s}\omega^{s}e^{-\frac{\omega}{\omega_{c}}},
\]
and an integral on $[0,\infty)$, then for $g(x)=\omega_{c}x$ we find
\[
h(x)=\sqrt{2\alpha}\omega_cx^{s/2}e^{-s/2},
\]
the transformation to the chain can achieved using
\begin{equation}
U_{n}(x)=x^{s/2}\tilde{L}_{n}^s(x)=x^{s/2}e^{-s/2}L_{n}^s(x)N_{n}^{-1},
\end{equation}
where $L_{n}^s(x)$ is an associated Laguerre polynomial and $N_{n}$ is the normalisation of the orthogonality relation for the associated Laguerre polynomials:
\begin{eqnarray}
\int_{0}^{1}dxx^{s}e^{-x}L_{n}^s(x)L_{m}^s(x)&=&\delta_{nm}N_{n}^{2}\\
N_{n}^{2}&=&\frac{\mathrm{\Gamma}(n+s+1)}{n!}.
\end{eqnarray}
So the transformation reads
\[
\hat{A}\int_{0}^{\infty}dxh(x)(a_{x}+a_{x}^{\dagger})=\frac{1}{2}\sigma_{3}c_0(b_{0}+b_{0}^{\dagger}),
\]
with $c_0=\omega_{c}\sqrt{2\alpha\mathrm{\Gamma}(s+1)}=\sqrt{\eta_0}$ where $\eta_{0}=\int_{0}^{\infty}J(\omega)\,d\omega$; and for the other term from the recurrence relations of the associated Laguerre polynomials \cite{Handbook}, the frequencies and tunneling are
\begin{eqnarray}
\omega_{n}&=&\omega_{c}\left(2n+1+s\right)\label{Lexenergies},\\
\nonumber\\
t_{n}&=&\omega_{c}\sqrt{(n + 1)(n + s + 1)}.\label{Lextn}.
\end{eqnarray}

\subsection{Discrete spectral densities - The logarithmically discretized Spin-Boson model}\label{sec;discrete}
Our main result (Theorem \ref{mainResult}) maps continuous environments onto discrete semi-infinite chains which are are ideal for t-DMRG simulation. However in some situations we may want to simulate systems which are coupled to a discrete set of oscillators which are not described by a continuous spectral function. This may occur physically in many ways, for example systems coupled to vibrations of small or finite-sized environments are coupled to discrete spectral densities. Many of these discrete environments can be still be transformed into a chain using the same procedure we have presented for the continuous mapping, the only different is that one now makes use of the classical discrete orthogonal polynomials \cite{baik}. One example of this is the finite-size discretisation of the spectral densities considered in section \ref{sec;examples11}. It will be shown in section \ref{sec;hahn} that these discrete environments can be mapped onto chains suitable for t-DMRG analysis using the classical Hahn polynomials. In this subsection we wish to focus on another important class of discrete spectral densities which are logarithimcally discretised approximations to continuous environments. These discretised spectral densities are a key part of the powerful NRG algorithm used to study condensed matter systems, and a detailed description of this numerical method and the reasons for performing this discretisation are given in Ref. \cite{bulla}.

In the NRG approach to the same SBM considered in section \ref{sec;examples1}, the continuous spin-boson Hamiltonian is approximated by representing all the modes in a energy range $\omega_{c}\Delta^{-n}-\omega_{c}\Delta^{-n-1}$ by a single effective degree of freedom described by creation and annihilation operators $a_{n},a_{n}^{\dagger}$. These modes may either be fermionic or bosonic. The logarithmic discretisation parameter $\Delta$ is always greater than unity, and the original continuum spin-boson Hamiltonian is recovered by sending $\Delta\rightarrow 1$. Assuming a spectral density of the form $J(\omega)=2\pi\alpha\omega_{c}^{1-s}\omega^{s}\theta(\omega_{c}-\omega)$, this logarithmic discretisation procedure generates a discrete spin-boson Hamiltonian $H_{L}$ of the form \cite{bulla},
\begin{equation}
H_{L}=H_{loc}+\frac{\sigma_{z}}{2\sqrt{\pi}}\sum_{n=0}^{\infty}\gamma_{n}(a_{n}+a_{n}^{\dagger})+\sum_{n=0}^{\infty}\zeta_{n}a_{n}^{\dagger}a_{n}\label{logsb},
\end{equation}
where
\begin{eqnarray}
\gamma_{n}^{2}&=&\frac{2\pi\alpha}{1+s}\omega_{c}^{2}(1-\Delta^{-(1+s)})\Delta^{-n(1+s)}=\gamma_{s}\Delta^{-n(1+s)},\\
\zeta_{n}&=&\frac{s+1}{s+2}\frac{1-\Delta^{-(s+2)}}{1-\Delta^{-(s+1)}}\omega_{c}\Delta^{-n}=\zeta_{s}\Delta^{-n}.
\end{eqnarray}
As described in detail in Ref. \cite{bulla}, the NRG algorithm can only be applied after the Hamiltonian $H_{L}$ of Eq. (\ref{logsb}) is mapped onto a $1-D$ nearest-neighbour chain of the form,
\begin{equation}
H_{c}=H_{loc}+\frac{1}{2}\sqrt{\frac{\eta_{0}}{\pi}} \sigma_{z} (b_{0}+b_{0}^{\dagger})+\sum_{n}\omega_{n}b_{n}^{\dagger}b_{n}+t_{n}b_{n+1}^{\dagger}b_{n}+t_{n}b_{n}^{\dagger}b_{n+1}\label{hcl}.
\end{equation}
To do this Bulla \textit{et al.} \cite{bulla} first introduce a new set of bosonic modes $b_{n},b_{n}^{\dagger}$ defined by an orthogonal transformation $U$ which acts on the $a_{n},a_{n}^{\dagger}$,
\begin{eqnarray}
b_{n}^{\dagger}&=&\sum_{m}U_{nm}a_{m}^{\dagger},\label{bnl}\\
b_{n}&=&\sum_{m}U_{nm}a_{m},\label{bnldagger}.
\end{eqnarray}
They then derive the following recurrence relation for the matrix elements $U_{nm}$ that bring the Hamiltonian of (\ref{logsb}) to the chain form of (\ref{hcl}) \cite{bulla},
\begin{equation}
\zeta_{n}U_{mn}=\omega_{m}U_{mn}+t_{m}U_{m+1n}+t_{m-1}U_{m-1n}.\label{rec}
\end{equation}
This recurrence relation is then solved numerically by an iterative procedure to provide the hopping parameters $t_{n}$ and site energies $\omega_{n}$ which are the input parameters for the NRG algorithm. This recurrence relation is the standard method for finding the chain parameters  for almost all numerical simulation methods that use the chain representation for impurity problems at present. However the numerical solutions of the recurrence relations are highly sensitive to numerical noise and are often unstable as the dimension of $U$ becomes large. In Ref. \cite{bulla} it is stated that numerical instabilities typically appear after about $25-30$ iterations. The results presented in this paper show that the problem of implementing this chain mapping effectively amount to the problem of finding a suitable basis of orthogonal polynomials to represent the bath degrees of freedom. The recursive method given by Bulla {\it et al.} \cite{bulla} is in fact equivalent to the recursive Stieltjes method for determining the coefficients of orthogonal polynomials, a procedure which is well-known to be numerically unstable for many weight functions, and in certain cases is so inaccurate that it completely fails for all practical purposes \cite{gander}. As orthogonal polynomials are very useful and widely used in numerical quadrature, the problem of determining recurrence coefficients accurately for arbitrary weight functions has been investigated in detail, and alternative numerical algorithms for determining recurrence coefficients to arbitrary accuracy have been developed, most notably the ORTHPOL package of Gautschi \cite{ORTHPOL}. Our results showing that the chain mapping can always be cast as a problem of finding recurrence coefficients of orthogonal polynomials has thus allowed us to make use of these methods, therefore removing the issue of numerical stability from the general problem.

However, for the specific example considered in this section we can in fact solve the recursion relation analytically, and can give explicit formulae for the parameters $t_{n},\omega_{n}$. We now demonstrate that the solution $U_{nm}$ to the recurrence relation of \cite{bulla} can be written as,
\begin{eqnarray}
U_{nm}&=&\frac{\Delta^{-\frac{m(1+s)}{2}}p_{n}(\Delta^{-m},\Delta^{-s},1|\Delta^{-1})}{N_{n}},\label{jac}
\end{eqnarray}
where the functions $p_{n}(\Delta^{-m},\Delta^{-s},1|\Delta^{-1})$ in Eq. (\ref{jac}) are known as the little-q Jacobi polynomials \cite{Koekoek}. These polynomials obey the following orthogonality relation which defines the normalization constants $N_{n}$ that appear in Eq. (\ref{jac}),
\begin{eqnarray}
\delta_{nm}N^{2}_{n}&=&\sum_{k=0}^{\infty}\Delta^{-k(1+s)}p_{n}(\Delta^{-k},\Delta^{-s},1|\Delta^{-1})p_{m}(\Delta^{-k},\Delta^{-s},1|\Delta^{-1})\\
N^{2}_{n}&=&\frac{\Delta^{-n(1+s)}(\Delta^{-1};\Delta^{-1})^{2}_{n}}{(\Delta^{-(s+1)};\Delta^{-1})_{n}^{2}(1-\Delta^{-(2n+1+s)})}\label{ortho1}.
\end{eqnarray}
In Eq. (\ref{ortho1}) $(a:q)_{n}$ is the q-shifted factorial described in \cite{Koekoek}, and defined as

\begin{equation}
(a:q)_{n}=(1-a)(1-aq)(1-aq^2)...(1-aq^{n-1}).
\end{equation}

Noting that $p_{0}(x,a,b|q)=1$ \cite{Koekoek},
\begin{eqnarray}
\sum_{k=0}^{\infty}\Delta^{-k(1+s)}p_{n}(\Delta^{-k},\Delta^{-s},1|\Delta^{-1})&=&\delta_{0n}N^{2}_{0}\\
&=&\delta_{0n}\frac{1}{1-\Delta^{-(s+1)}}.\label{ortho2}
\end{eqnarray}
One further property of the little-q Jacobi polynomials that we require to perform the chain mapping is their three-term recurrence relation. This is given by,
\begin{eqnarray}
\Delta^{-n}p_{j}(\Delta^{-n},\Delta^{-s},1|\Delta^{-1})&=&(A_{j}+C_{j})p_{j}(\Delta^{-n},\Delta^{-s},1|\Delta^{-1})\nonumber\\
&-&A_{j}p_{j+1}(\Delta^{-n},\Delta^{-s},1|\Delta^{-1})\nonumber\\
&-&C_{j}p_{j-1}(\Delta^{-n},\Delta^{-s},1|\Delta^{-1}),
\end{eqnarray}
where,
\begin{eqnarray}
A_{j}&=&\Delta^{-j}\frac{(1-\Delta^{-(j+1+s)})^2}{(1-\Delta^{-(2j+1+s)})(1-\Delta^{-(2j+2+s)})},\\
C_{j}&=&\Delta^{-(j+s)}\frac{(1-\Delta^{-j})^2}{(1-\Delta^{-(2j+s)})(1-\Delta^{-(2j+1+s)})}.
\end{eqnarray}
Using the orthogonality relations one can easily show that the transformation $U$ is an orthogonal matrix and that the $b_{n},b_{n}^{\dagger}$ operators obey the same bosonic or fermionic commutation relations as the $a_{n},a^{\dagger}_{n}$ operators.

Now we express the original Hamiltonian in terms of the $b_{n},b_{n}^{\dagger}$ operators. The second term of (\ref{logsb}) transforms as,
\begin{eqnarray*}
\sum_{n=0}^{\infty}\gamma_{n}(a_{n}+a_{n}^{\dagger})&=&\sum_{n=0}^{\infty}\sum_{k=0}^{\infty}\gamma_{n}U_{kn}(b_{k}+b_{k}^{\dagger})\\
&=&\gamma_{s}\sum_{k=0}^{\infty}N_{k}^{-1}(b_{k}+b_{k}^{\dagger})\sum_{n=0}^{\infty}\Delta^{-n(1+s)}p_{k}(\Delta^{-n},\Delta^{-s},1|\Delta^{-1})\\
&=&\gamma_{s}\sum_{k=0}^{\infty}N_{k}^{-1}(b_{k}+b_{k}^{\dagger})\delta_{0k}N_{k}^{2}\\
&=&\gamma_{s}N_{0}(b_{0}+b_{0}^{\dagger})=\sqrt{\eta}(b_{0}+b_{0}^{\dagger})
\end{eqnarray*}
where we have used property of Eq. (\ref{ortho2}) and in the last step the explicit form of $\gamma_{s}N_{0}$ has been used,
\begin{equation*}
\gamma_{s}^{2}N_{0}^{2}=\frac{2\pi\alpha}{1+s}\omega_{c}^{2}=\int_{0}^{\omega_{c}}J(\omega)d\omega=\eta_{0}
\end{equation*}
where $\eta_{0}$ is the coupling defined by Bulla {\it et al.} \cite{bulla}. The chain couplings and energies arise from the transformation of the third term of (\ref{logsb}), which then becomes,
\begin{eqnarray*}
\sum_{n=0}^{\infty}\zeta_{n}a_{n}^{\dagger}a_{n}&=&\sum_{n=0}^{\infty}\sum_{j,k=0}^{\infty}\zeta_{s}\Delta^{-n}b_{j}^{\dagger}b_{k}U_{jn}U_{kn}\\
&=&\zeta_{s}\sum_{j,k=0}^{\infty}\frac{b_{j}^{\dagger}b_{k}}{N_{j}N_{k}}\\
&\times&\sum_{n=0}^{\infty}\Delta^{-n(1+s)}\overbrace{\Delta^{-n}p_{j}(\Delta^{-n},\Delta^{-s},1|\Delta^{-1})}p_{k}(\Delta^{-n},\Delta^{-s},1|\Delta^{-1})\label{energies}.
\end{eqnarray*}
The orthogonality condition cannot be used directly here as the weight function of the sum $\Delta^{-n(s+1)}$ is multiplied by an extra factor of $\Delta^{-n}$. However, we can absorb this factor in the sum using the recurrence relation of the little-q polynomials on the braced term above. Substituting the RHS of the recurrence relation into Eq. (\ref{energies}) allows the orthogonality relations to be used, and ensures that the resulting couplings are nearest-neighbour only. This generates a $1-D$ chain Hamiltonian $H'$ of the form of (\ref{hc}), with energies $\omega_{n}$ and hoppings $t_{n}$ given by,
\begin{eqnarray}
\omega_{n}&=&\zeta_{s}(A_{n}+C_{n}),\label{e}\\
t_{n}&=&-\zeta_{s}\left(\frac{N_{n+1}}{N_{n}}.\right)A_{n}\label{t}
\end{eqnarray}

We now show that our explicit form for $U_{mn}$ in terms of little-q Jacobi polynomials satisfies the recurrence equation of Bulla {\it et al.} Eq. (\ref{rec}). We first use Eqs. (\ref{e}) and (\ref{t}) to express the RHS of (\ref{rec}) in terms of the little-q Jacobi recurrence coefficients $A_{n},C_{n}$. Substituting $\zeta_{n}=\zeta_{s}\Delta^{-n}$ on the LHS as well, one obtains
\begin{eqnarray}
\zeta_{s}\Delta^{-n}U_{mn}&=&\zeta_{s}(A_{m}+C_{m})U_{mn}-\zeta_{s}\left(\frac{N_{m+1}}{N_{m}}\right)A_{m}U_{m+1n}\\
&-&\zeta_{s}\left(\frac{N_{m}}{N_{m-1}}\right)A_{m-1}U_{m-1n}.
\end{eqnarray}
Now substitute for the $U_{mn}$ using (\ref{jac}) we obtain,
\begin{eqnarray*}
\Delta^{-n}p_{m}(\Delta^{-n},\Delta^{-s},1|\Delta^{-1})&=&(A_{m}+C_{m})p_{m}(\Delta^{-n},\Delta^{-s},1|\Delta^{-1})\\
&+&\left(\frac{N_{m}}{N_{m-1}}\right)^{2}A_{m-1}p_{m-1}(\Delta^{-n},\Delta^{-s},1|\Delta^{-1})\\
&+&A_{m}p_{m+1}(\Delta^{-n},\Delta^{-s},1|\Delta^{-1}).\label{recurs}
\end{eqnarray*}

The final additional that result we need to use is that the following relations holds,
\begin{eqnarray}
C_{m}&=&A_{m-1}\left(\frac{N_{m}}{N_{m-1}}\right)^{2} \hspace{1cm} m>1,\\
C_{0}&=&0 \hspace{1cm} m=0,
\end{eqnarray}
which is easily shown from the definitions of $A_{m},C_{m}$. Substituting this into Eq.(\ref{final}), the equation reduces to,
\begin{eqnarray*}
\Delta^{-n}p_{m}(\Delta^{-n},\Delta^{-s},1|\Delta^{-1})&=&(A_{m}+C_{m})p_{m}(\Delta^{-n},\Delta^{-s},1|\Delta^{-1})\\
&+&A_{m}p_{m+1}(\Delta^{-n},\Delta^{-s},1|\Delta^{-1})\\
&+&C_{m}p_{m-1}(\Delta^{-n},\Delta^{-s},1|\Delta^{-1}).\label{final}
\end{eqnarray*}
This is just the recurrence relation for the little-q Jacobi polynomials. Therefore the original $U_{nm}$ of Eq. (\ref{jac}) is the solution of the recurrence relations of Bulla {\it et al.}. The energies and hopping amplitudes are simple functions of $A_{n},C_{n}$ and the following additional relation can also be derived using Eqs. (\ref{e}) and (\ref{t}),
\begin{equation*}
\omega_{n}=-t_{n}\left(\frac{N_{n}}{N_{n+1}}\right)-t_{n-1}\left(\frac{N_{n-1}}{N_{n}}\right).
\end{equation*}

\subsection{Linearly-discretised baths}\label{sec;hahn}
We now consider an environment consisting in a large, but discrete number $N+1$ of environmental degrees of freedom. The Hamiltonian for such a system can be written as
\begin{equation}
H=H_{loc}+\frac{\sigma_{z}}{2\sqrt{\pi}}\sum_{n=0}^{N}\gamma_{n}(a_{n}+a_{n}^{\dagger})+\sum_{n=0}^{N}\zeta_{n}a_{n}^{\dagger}a_{n}\label{hlsb},
\end{equation}
where $\zeta_{n}=\omega_{c}n/(N+1)$, the coupling constants $\gamma_{n}$ are
\begin{equation}
\gamma_{n}^{2}=2\pi\alpha\frac{\omega_{c}^{2}}{(N+1)^{s+1}}\left( \begin{array}{c} s+n \\
n  \end{array}\right),
\end{equation}
and the bosonic operators obey the discrete commutation relation $[a_{n},a_{m}^{\dagger}]=\delta_{nm}$. The expressions for $\zeta_{n}$ and $\gamma_{n}$ arise from a linear discretisation of the continuous spectral function $J(\omega)=2\pi\alpha\omega_{c}^{1-s}\omega^{s}$ with $N+1$ oscillators equally-spaced in $k$ approximating the bath. As $N\rightarrow\infty$ we recover the continuum limit which was mapped using Jacobi polynomials. This discrete Hamiltonian can be mapped to the nearest-neighbour chain using the orthogonal transformation

\begin{equation}
b_{k}=\sum_{n=0}^{N}\left( \begin{array}{c} s+n \\
n  \end{array}\right)^{\frac{1}{2}}\,Q_{k}(n,s)\rho_{k}^{-1}a_{n},
\end{equation}
where $Q_{l}(n,s)=Q_{l}(n,s,0,N)$ is a Hahn polynomial \cite{baik,Koekoek}. The Hahn polynomials are classical discrete orthogonal polynomials and are defined by the discrete inner product
\begin{equation}
\sum_{n=0}^{N}\left( \begin{array}{c} \alpha+n \\
n  \end{array}\right)\left( \begin{array}{c} \beta+N-x \\
N-x  \end{array}\right)Q_{l}(n,\alpha,\beta,N)Q_{k}(n,\alpha,\beta,N)=\rho_{k}^{2}\delta_{lk}.
\end{equation}
Introducing the Pochhammer symbol $(a)_{k}=a(a+1)...(a+N-1)$, the normalization factors $\rho_{k}$ are given by
\begin{equation}
\rho_{k}=\frac{(-1)^{k}(k+\alpha+\beta+1)_{N+1}(\beta+1)_{k}k!}{(2k+\alpha+\beta+1)(\alpha+1)_{k}(-N)_{k}N!}.
\end{equation}
Using the inner product relation is it easy to show that the new modes $b_{k},b_{k}^\dagger$ obey the same commutation relations as the original modes, and using the identity \cite{Koekoek},
\begin{equation}
\left( \begin{array}{c} s+x \\
x  \end{array}\right)\left( \begin{array}{c} s+y \\
y  \end{array}\right)\sum_{k=0}^{N}\frac{Q_{k}(x,\alpha,\beta,N)Q_{k}(y,\alpha,\beta,N)}{\rho_{k}^{2}}=\delta_{xy},
\end{equation}
one can show that the inverse transform is
\begin{equation}
a_{n}=\sum_{k=0}^{N}\left( \begin{array}{c} s+n \\
n  \end{array}\right)^{\frac{1}{2}}Q_{k}(n,s)b_{k}.\label{ahahn}
\end{equation}

The Hahn polynomials also obey the three-term recurrence relation
\begin{eqnarray}
xQ_{n}(x,\alpha,\beta,N)&=&A_{n}Q_{n+1}(x,\alpha,\beta,N)-(A_{n}+C_{n})Q_{n}(x,\alpha,\beta,N)\nonumber\\
&+&C_{n}Q_{n-1}(x,\alpha,\beta,N),
\end{eqnarray}
where
\begin{eqnarray}
A_{n}&=&\frac{(n+\alpha+\beta+1)(n+\alpha+1)(N-n)}{(2n+\alpha+\beta)(2n+\alpha+\beta+1)},\\
C_{n}&=&\frac{n(n+\alpha+\beta+N+1)(n+\beta)}{(2n+\alpha+\beta)(2n+\alpha+\beta+1)}.
\end{eqnarray}
We now substitute (\ref{ahahn}) into $H$ to obtain the chain Hamiltonian. As in the previous examples, the three-term recurrence and orthogonality of the $Q_{k}(n,s)$ ensures the generation of a discrete nearest-neighour chain with energies and couplings determined by the recurrence coefficients of the Hahn polynomials. The transformed Hamiltonian $H_{c}$ is given by,

\begin{eqnarray}
H_{c}&=&H_{loc}+\frac{1}{2}\sqrt{\frac{\eta}{\pi}}\sigma_{z}(b_{0}+b_{0}^{\dagger})+\sum_{n=0}^{N}\omega_{n}b_{n}^{\dagger}+t_{n}b_{n}^{\dagger}b_{n+1}+t_{n}b_{n+1}^{\dagger}b_{n},\\
\omega_{n}&=&\frac{\omega_{c}}{N+1}(A_{n}+C_{n}),\\
t_{n}&=&\frac{\omega_{c}}{N+1}\frac{\rho_{n+1}}{\rho_{n}}A_{n}.
\end{eqnarray}

The discrete weight function is not in the Szeg\"o class, and at finite $N$ we find that $\omega_{n}\rightarrow \omega_{c}/4,t_{n}\rightarrow 0$ as $n \rightarrow N$. However, taking the limit $N\rightarrow\infty$ leads to the recovery of the results obtained for a continuous power-law spectral density in section \ref{sec;examples11}. Formally this results from the limit
relation~\cite{Koekoek},$$\lim_{N\rightarrow\infty}Q_{n}(Nx,a,b,N)=(-1)^{n}P_{n}^{a,b}(2x-1)/P_{n}^{a,b}(1),$$ although this can also be seen by using taking the limit of the explicit expressions for the recurrence relations $A_{n},C_{n}$ and normalizations $N_{n}$ for the Hahn polynomials. Using the explicit expressions, it is also easy to show that the difference between the chain energies and hoppings for the continuous and linearly discretized bath are of $\mathcal{O}(n/N)$ for $n\ll N$. The differences in the behaviour of the chain parameters in the finite and continuous cases has a physical significance that will be explored in detail elsewhere. Here, it will be enough to say the vanishing of the $t_{n}$ as $n\rightarrow\infty$ is related to the existence of a recurrence time in the dynamics of the discrete problem.

\section{Conclusions} In this article we have demonstrated how the problem of a quantum system coupled linearly to the coordinates of a continuous bosonic or fermionic environment can be mapped exactly onto a one-dimensional model with only short-range interactions. Using the theory of orthogonal polynomials, we have shown how the parameters of this one dimensional representation can be simply obtained from the recurrence coefficients of the orthogonal polynomials of the spectral function describing the environment. Certain important spectral functions correspond to the weight functions of the classical orthogonal polynomials, and the parameters of the chain Hamiltonian can be given in closed analytical form. When the spectral function is not a classical weight function, we can make use of numerical techniques developed for determining recurrence coefficients that allow us to find the chain parameters to high precision and for arbitrary long chains. Our use of orthogonal polynomials simultaneously removes the need to discretise the environmental degrees of freedom and also the numerical instabilities that plague the recursive method previously used to implement this chain transformation. This is particularly important, as hitherto the principal use of the chain transformation is to enable numerical simulation of the dynamics and ground state properties of systems which strongly interact with their environments. The work presented here was principally motivated by an attempt to use the powerful t-DMRG simulation technique to study strongly-interacting open-quantum systems, a problem of considerable topically, especially in the emerging field of quantum effects in biology. The combination of our exact mapping has now made it possible to use t-DMRG to obtain extremely accurate simulations of open-system dynamics with no artefacts arising from having a discretised representation of the environment and a numerical approximation to the mapping. Using the theory of orthogonal polynomials, we have also given examples that could be used in the NRG community.

In addition to these practical advantages, the analytical mapping reveals an interesting structure to the system-environment systems we consider, in particular,the result that the chain representations
of all spectral functions in the {\it physically} broad, Szeg\"o class look the same in the asymptotic limit. This observation may lead to further improvements in the numerical efficiency of open-quantum simulations, allowing the chain representations to be truncated in an efficient way that is universal for all spectral functions in the Szeg\"o class. We have also presented results showing the relations between the chain representations of linearly and logarithmically discretised spin-boson models and the continuous case, results which may also prove useful in determining how to optimize the accuracy of simulations with a discretised representations.

\subsection*{Acknowledgments} We acknowledge Miguel \'Angel Rodr\'iguez and Herbert Spohn for illuminating comments. Financial support from the EU Integrated Project STREP action CORNER and an Alexander von Humboldt Professorship is gratefully acknowledged.

\end{document}